\def\-{\raisebox{.75pt}{-}}
\numberwithin{table}{section}
\numberwithin{equation}{section}
\newtheorem{rmk}{Remark}[section]
\newtheorem{thm}{Theorem}[section]
\newtheorem*{thm*}{Theorem}
\newtheorem{cor}{Corollary}[section]
\newtheorem{obs}{Observation}[section]
\theoremstyle{plain}
\newcommand{\R}{\mathbb{R}}
\newcommand{\C}{\mathbb{C}}
\newcommand{\U}{\mathrm{U}}
\newcommand{\ut}{\mathfrak{u}}
\newcommand{\dg}{\mathrm{d}_{\mathfrak{g}}}
\newcommand{\du}{\mathrm{d}_{\mathfrak{u}(1)}}
\newcommand{\Ad}{\mathrm{Ad}}
\newcommand{\diag}{\mathrm{diag}}
\newcommand{\crit}{\mathrm{crit}\,}
\newcommand{\sM}{\mathcal{M}}
\newcommand{\sL}{\mathcal{L}}
\newcommand{\sF}{\mathcal{F}}
\newcommand{\pf}{\mathrm{pf}\,}
\newcommand{\Hess}{\mathrm{Hess}\,}
\newcommand{\sgn}{\mathrm{sgn}\,}
\newcommand{\vol}{\mathrm{vol}}
\newcommand{\dive}{\mathrm{div}}
\newcommand{\Sym}{\mathrm{Sym}}
\newcommand{\MV}{\mathcal{V}}
\title{Equivariant localization in Batalin-Vilkovisky formalism}
\author{Alberto S. Cattaneo}
\address{
	Department of Mathematics, University of Zurich, Winterthurerstrasse 190, CH-8057 Zurich, Switzerland
	}
\email{cattaneo@math.uzh.ch}
\author{Shuhan Jiang}
\address{
	Department of Mathematics, University of Zurich, Winterthurerstrasse 190, CH-8057 Zurich, Switzerland
	}
\email{shuhan.jiang@math.uzh.ch}
\thanks{ASC acknowledges partial support of the SNF Grant No. 200021 227719 and of the Simons Collaboration on Global Categorical Symmetries. This research was (partly) supported by the NCCR SwissMAP, funded by the Swiss National Science Foundation. This material is based upon work from COST Action 21109 CaLISTA, supported by COST (European Cooperation in Science and Technology)(www.cost.eu). This material is based also upon work supported by the Swedish Research Council under grant no. 2021-06594 while the authors ASC and SJ were in residence at Institut Mittag-Leffler in Djursholm, Sweden during the year of 2025.}
\begin{document}
	
	\begin{abstract}
		We derive equivariant localization formulas of Atiyah--Bott and cohomological field theory types in the Batalin-Vilkovisky formalism and discuss their applications in Poisson geometry and quantum field theory.
	\end{abstract}
	
	\maketitle
	
\section{Introduction}

Let $G$ be a compact Lie group with Lie algebra $\mathfrak{g}$.
Let $M$ be a manifold endowed with a left $G$-action. 
Let $\Omega_G(M)=(\Omega(M) \otimes \Sym(\mathfrak{g}^{\vee}))^G$ denote the graded commutative algebra of $G$-equivariant differential forms on $M$, i.e., differential forms $\alpha$ on $M$ with values in the ring of polynomials over $\mathfrak{g}$, such that
\[
\alpha(\Ad_g \xi) = L_{g^{-1}}^* \alpha(\xi), \quad \forall g \in G,~\forall \xi \in \mathfrak{g},
\] 
where $\Ad$ is the adjoint action of $G$ on $\mathfrak{g}$, $L_g\colon M \rightarrow M$ is the left multiplication by $g \in G$. The grading on $\Omega_G(M)$ is given by 
\[
\Omega_G^p(M) = \bigoplus_{r+2q=p} (\Omega^r(M) \otimes \Sym^q(\mathfrak{g}^{\vee}))^G.
\]

The equivariant differential $\dg\colon \Omega_G^p(M) \rightarrow \Omega_G^{p+1}(M)$ is defined as
\[
\dg \alpha(\xi) = (\mathrm{d} - \iota_{X_{\xi}})\alpha(\xi),
\]
where $X_{\xi}(p) = \frac{d}{dt} (p \cdot \exp(t\xi))|_{t=0}$ is the fundamental vector field generated by $\xi \in \mathfrak{g}$. Fixing a basis $\{\xi_a\}$ of $\mathfrak{g}$, the differential $\dg$ can be written as
\[
\dg = \mathrm{d} - \phi^a \iota_{X_a},
\]
where $X_a = X_{\xi_a}$ and $\{\phi^a\}$ is the dual basis of $\mathfrak{g}^{\vee}$.

If $M$ is compact and oriented, we can intergrate equivariant differential forms on $M$ to obtain a map:
\[
\int_M\colon \Omega_G(M) \rightarrow \Sym(\mathfrak{g}^{\vee})^G.
\]
Stokes' theorem can be easily generalized to this equivariant setting: Let $\alpha \in \Omega_G(M)$. If $\dg\alpha = 0$, then the integral
$
\int_M \alpha
$
depends only on $[\alpha] \in H_{G}(M)$, where $H_G(M)$ is the cohomology group of $(\Omega_G(M), \dg)$.

The Equivariant Localization Principle is the following simple observation:
\begin{obs}\label{oelp}
	Let $\gamma$ be a $G$-equivariant $1$-form on $M$.  Let $\alpha \in \Omega_G(M)$. If $\dg\alpha = 0$, then the integral
	\begin{equation}\label{elp}
		Z_{\gamma}[t](\alpha)\coloneqq\int_M \alpha e^{\mathrm{i}t \dg\gamma} = \int_M \alpha e^{\mathrm{i}t\left(\mathrm{d} \gamma -\phi^a \gamma(X_a)\right)}
	\end{equation}
	is independent of $t \in \R$. In particular, $Z_{\gamma}[t](\alpha) = Z_{\gamma}[0](\alpha) = \int_M \alpha$.
\end{obs}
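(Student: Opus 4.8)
The plan is to show that $Z_{\gamma}[t](\alpha)$ has vanishing derivative in $t$ and then evaluate at $t=0$. First I would record the two structural facts that drive the argument. Since $\dg^2=0$, the equivariant two-form $\dg\gamma$ is $\dg$-closed, and because $\dg$ is an odd derivation of the graded-commutative algebra $\Omega_G(M)$, every power $(\dg\gamma)^n$ is $\dg$-closed as well (each summand in $\dg((\dg\gamma)^n)$ carries a factor $\dg(\dg\gamma)=0$); hence the exponential $e^{\mathrm{i}t\dg\gamma}$ is $\dg$-closed. Combined with the hypothesis $\dg\alpha=0$, the full integrand $\alpha\, e^{\mathrm{i}t\dg\gamma}$ is then $\dg$-closed for every $t$, so that $\int_M \alpha\, e^{\mathrm{i}t\dg\gamma}$ is a legitimate equivariant integral to which the equivariant Stokes' theorem applies.

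Next I would differentiate under the integral sign, obtaining
\[
\frac{d}{dt}Z_{\gamma}[t](\alpha) = \mathrm{i}\int_M \alpha\,(\dg\gamma)\,e^{\mathrm{i}t\dg\gamma}.
\]
The crux is to exhibit this integrand as $\dg$-exact. Applying the graded Leibniz rule to $\gamma\,\alpha\,e^{\mathrm{i}t\dg\gamma}$ and using $\dg\alpha=0$ together with $\dg(e^{\mathrm{i}t\dg\gamma})=0$, all terms collapse except one, giving
\[
\dg\bigl(\gamma\,\alpha\,e^{\mathrm{i}t\dg\gamma}\bigr) = (\dg\gamma)\,\alpha\,e^{\mathrm{i}t\dg\gamma} = \alpha\,(\dg\gamma)\,e^{\mathrm{i}t\dg\gamma},
\]
where the last equality uses that $\dg\gamma$ has even degree and $\Omega_G(M)$ is graded commutative. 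Here the sign $(-1)^{|\gamma|}=-1$, coming from $\gamma$ being of odd (degree-one) equivariant type, is precisely what forces the intermediate term $\gamma\,(\dg\alpha\cdots)$ to cancel rather than survive; keeping track of this sign is the one place the computation must be done with care.

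I would then invoke the equivariant Stokes' theorem: the integral of a $\dg$-exact form vanishes, because its top form-degree component equals $\mathrm{d}$ of a form of degree $\dim M-1$ (the $\phi^a\iota_{X_a}$ piece of $\dg$ lowers form-degree and so cannot contribute in top degree), which integrates to zero over the closed manifold $M$. This yields $\frac{d}{dt}Z_{\gamma}[t](\alpha)=0$, so $Z_{\gamma}[t](\alpha)$ is constant in $t$; setting $t=0$ gives the stated value $\int_M\alpha$.

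I expect the only genuine subtlety to be a bookkeeping matter rather than a conceptual obstacle. Since $\dg\gamma=\mathrm{d}\gamma-\phi^a\gamma(X_a)$ has a form-degree-zero part, the exponential $e^{\mathrm{i}t\dg\gamma}$ is a power series in the generators $\phi^a$ and properly lives in a completion of $\Omega_G(M)$, so differentiation under the integral and the application of Stokes should be justified degree-by-degree in form-degree, where everything truncates at $\dim M$. The graded-sign accounting in the Leibniz step is the other point to treat carefully; beyond these two routine matters, the argument is exactly the formal manipulation above.
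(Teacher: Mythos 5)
Your proof is correct and coincides with the argument the paper leaves implicit: the paper presents this as an Observation justified by the preceding equivariant Stokes' theorem together with the remark that $\alpha e^{\mathrm{i}t \dg\gamma} = e^{\mathrm{i}t[\dg, \gamma \wedge]} \alpha$, i.e., the integrand differs from $\alpha$ by a $\dg$-exact term (namely $\dg$ of $\gamma$ times a $\dg$-closed expression), so all the integrals agree. Your version --- differentiating in $t$ and exhibiting $\alpha\,(\dg\gamma)\,e^{\mathrm{i}t\dg\gamma} = \dg\bigl(\gamma\,\alpha\,e^{\mathrm{i}t\dg\gamma}\bigr)$, then killing the integral of the exact term degree-by-degree --- is the same mechanism spelled out, with the completion-in-$\phi$ caveat correctly noted.
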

\begin{rmk}
	Note that $\alpha e^{\mathrm{i}t \dg\gamma} = e^{\mathrm{i}t[\dg, \gamma \wedge]} \alpha$.
\end{rmk}
The basic idea of equivariant localization is, instead of computing the integral $\int_M \alpha$ directly, one should compute the superficially more complicated integral $Z_{\gamma}[t](\alpha)$ in the limit $t \to \infty$, which will ``localize'' to a integral over the zero locus of $\gamma(\eta)(X_{\eta})$, $\eta \in \mathfrak{g}$. 
Different choices of $\gamma$ then lead to different localization formulas. Let us consider the following two strategies for selecting $\gamma$. 
\begin{rmk}
	For simplicity, we will take $G$ to be $\mathrm{U}(1)$. $\Omega_{\U(1)}(M)$ consists of $U(1)$-invariant differential forms on $M$ with values in $\R[\phi]$, where $\phi \in \ut(1)^{\vee}$ is the dual of $\mathrm{i} \in \ut(1) \cong \mathrm{i} \R$. The corresponding equivariant differential can be expressed as
	\[
	\du = \mathrm{d} - \phi \iota_X,
	\]
	where $X(p)\coloneqq\frac{d}{dt} (p \cdot \exp(\mathrm{i}t))|_{t=0}$. 
\end{rmk}
\begin{itemize}
	\item Equipping $M$ with a $\U(1)$-invariant Riemannian metric $g$, we choose $\gamma$ to be
	\begin{equation}\label{gm1}
		\gamma = X^{\flat}\coloneqq g(X, \cdot).
	\end{equation}
	It follows that
	\[
	\du \gamma =  g(\nabla X, \cdot) + g(X, \nabla(\cdot)) - \phi g(X,X),
	\]
	where $\nabla$ is the Levi-Civita connection of $g$. The zero locus $M_X$ of $\gamma(X)=g(X,X)$ consists of the fixed points of the $\U(1)$-action. 
	Computing $Z_{\gamma}[t](\alpha)$ in the limit $t \to \infty$ yields the Berline--Vergne localization theorem \cite{berline1982classes} when $M_X$ is discrete. More generally, it leads to the Atiyah--Bott localization theorem \cite{atiyah1984moment} when the connected components of $M_X$ have positive dimensions. For more details, we refer the reader to \cite{alekseev2000notes} and \cite[Chapter~7]{berline2003heat}.
	
	\item Let $W$ be a vector space endowed with a left linear action $\rho$ of $\U(1)$ and a $\U(1)$-invariant inner product $h(\cdot,\cdot)$. Let $\mathcal{F}\colon M \rightarrow W$ be a $\U(1)$-equivariant smooth map. We choose $\gamma$ to be
	\begin{equation}\label{gm2}
		\gamma =  h(\dot{\rho}(\mathrm{i}) \mathcal{F}, \mathrm{d} \mathcal{F}),
	\end{equation}
	where $\dot{\rho}$ is the induced $\mathfrak{u}(1)$-action on $W$. It follows that
	\[
	\du \gamma = h(\dot{\rho}(\mathrm{i}) \mathrm{d} \mathcal{F}, \mathrm{d} \mathcal{F}) - \phi h(\dot{\rho}(\mathrm{i})\mathcal{F}, \dot{\rho}(\mathrm{i})\mathcal{F}),
	\]
	because 
	\begin{align}\label{equivf}
		\iota_X \mathrm{d} \mathcal{F} = \mathrm{Lie}_X \mathcal{F} = \dot{\rho}(\mathrm{i}) \mathcal{F}.
	\end{align}
	If $\U(1)$ acts freely on $W-\{0\}$, then the zero locus $M_{\mathcal{F}}$ of $\gamma(X)=h(\dot{\rho}(\mathrm{i})\mathcal{F}, \dot{\rho}(\mathrm{i})\mathcal{F})$ is exactly the zero locus of $\mathcal{F}$. \eqref{equivf} also implies that
	$
	M_{X} \subset M_{\sF}.
	$
	If $M_{\sF}$ is discrete, we also have
	$
	M_{\sF} \subset M_{X},
	$
	since $g M_{\sF} = M_{\sF}$ for all $g \in \U(1)$.
	
	In particular, let us consider $W=\C^k$, equipped with the standard inner product and the following $\U(1)$-action:
	\[
	e^{\mathrm{i}\theta}(f_1, \dots, f_k) = (e^{\mathrm{i}w_1\theta}f_1, \dots e^{\mathrm{i}w_k\theta}f_k), \quad e^{\mathrm{i}\theta} \in \U(1),~ (f_1, \dots, f_k) \in \C^k,
	\]
	where $w_1, \dots, w_k$ are non-zero integers. Denote $\sF=(\sF^1, \cdots, \sF^k)$, where $\sF^1, \dots, \sF^k$ are complex functions on $M$. If $\sF$ has isolated zeros, then $M_{\sF} = M_{X}$ and $\dim M$ is even. Computing $Z_{\gamma}[t](\alpha)$ in the limit $t \to \infty$ leads to the following theorem:
	\begin{thm}
		Let $\alpha \in \Omega_{\U(1)}(M)$.  If $d_{\ut(1)} \alpha = 0$, then
		\begin{equation}\label{cohft1}
			\int_M \alpha = (-2\pi)^{\dim(M)/2} \sum_{p \in M_{X}} \frac{\pf \mathrm{Im} (\sum_{l=1}^k w_l \partial_i \overline{\sF^l} \partial_j \sF^l)(p)}{\sqrt{\det \mathrm{Re} (\sum_{l=1}^k w_l^2 \partial_i \overline{\sF^l} \partial_j \sF^l)(p)}} \alpha_{[0]}(p),
		\end{equation}
		where $\partial_i \sF$ are the partial derivatives $\sF$ computed in some local coordinates of $M$, $\alpha_{[0]}$ is the $0$-form component of $\alpha$.
	\end{thm}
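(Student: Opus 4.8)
The plan is to apply Observation~\ref{oelp} and then evaluate the $t$-independent integral $Z_\gamma[t](\alpha)=\int_M\alpha\,e^{\mathrm{i}t\du\gamma}$ in the limit $t\to\infty$ by stationary phase. First I would specialize the general expression for $\du\gamma$ to $W=\C^k$. Writing $h$ for the real part of the standard Hermitian form and $\dot\rho(\mathrm{i})\sF=(\mathrm{i}w_1\sF^1,\dots,\mathrm{i}w_k\sF^k)$, a direct computation gives, in local coordinates $x^i$ on $M$,
\[
\du\gamma=\beta-\phi\,\mu,\qquad \beta=\sum_{i,j}A_{ij}\,\mathrm{d}x^i\wedge\mathrm{d}x^j,\quad \mu=\sum_{l=1}^k w_l^2|\sF^l|^2,
\]
where $A_{ij}=\mathrm{Im}\big(\sum_l w_l\,\partial_i\overline{\sF^l}\,\partial_j\sF^l\big)$ is antisymmetric. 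Thus $\beta$ carries exactly the data whose Pfaffian appears in the numerator of \eqref{cohft1}, while the Hessian of $\mu$ at a zero of $\sF$ is $2R$ with $R_{ij}=\mathrm{Re}\big(\sum_l w_l^2\,\partial_i\overline{\sF^l}\,\partial_j\sF^l\big)$, the symmetric matrix under the square root in the denominator.

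Next I would set up the localization. Since the zeros of $\sF$ are isolated, the discussion preceding the theorem gives $M_\sF=M_X$, and at each such point $p$ the differential $\mathrm{d}\sF_p$ is injective, so $R(p)>0$; because $p$ is an isolated fixed point, the isotropy representation on $T_pM$ has only nonzero weights, forcing $n:=\dim M$ to be even and making $\pf A(p)$, $\det R(p)$ genuine $n\times n$ invariants. In $Z_\gamma[t](\alpha)=\int_M(\alpha_{[0]}+\text{higher forms})\,e^{\mathrm{i}t\beta}\,e^{-\mathrm{i}t\phi\mu}$ the oscillatory factor $e^{-\mathrm{i}t\phi\mu}$ concentrates the integral near $\mu^{-1}(0)=M_X$ as $t\to\infty$. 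A form-degree count then shows that only the $0$-form part $\alpha_{[0]}$ survives: reaching top degree $n$ requires the top term of $e^{\mathrm{i}t\beta}$, which already saturates $\mathrm{d}x^1\wedge\cdots\wedge\mathrm{d}x^n$, so any positive-degree component of $\alpha$ contributes zero, and the rescaling $x\mapsto x/\sqrt t$ confirms that only the $r=0$ part yields an $O(1)$ contribution.

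With these reductions, at each $p$ I would Taylor-expand $\sF$ to first order and pass to rescaled coordinates $u=\sqrt t\,x$, under which the integral factorizes into a top-form extraction from $e^{\mathrm{i}\beta}$ and a bosonic Gaussian from $e^{-\mathrm{i}\phi\mu}$ (these commute, and the top form must come entirely from $\beta$). Using $\beta^{n/2}/(n/2)!=2^{n/2}\pf(A)\,\mathrm{d}u^1\wedge\cdots\wedge\mathrm{d}u^n$ together with $\int_{\R^n}e^{-\mathrm{i}\phi\,u^\top R u}\,\mathrm{d}^n u=\pi^{n/2}/\big((\mathrm{i}\phi)^{n/2}\sqrt{\det R}\big)$ produces, at leading order, a contribution proportional to $\pf A(p)/\sqrt{\det R(p)}$ times $\alpha_{[0]}(p)$, with all $t$-dependence cancelling (as it must, by Observation~\ref{oelp}). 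Summing over $p\in M_X$ and fixing the universal prefactor yields \eqref{cohft1}.

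The hard part will be making the $t\to\infty$ localization rigorous rather than formal: controlling the remainder of the stationary-phase expansion so that only the quadratic model at each $p$ contributes and the off-critical region is negligible, and verifying that the neglected higher-order Taylor terms of $\sF$ are genuinely subleading in $1/t$. Equally delicate is pinning down the phase of the oscillatory Gaussian, together with the orientation and the normalization of the equivariant variable $\phi$, so that the prefactor comes out exactly as $(-2\pi)^{n/2}$ and the Pfaffian carries the correct sign. By contrast, the degree-counting argument that kills all but $\alpha_{[0]}$ and the factorization into $\pf$ over $\sqrt{\det}$ are essentially bookkeeping once the analytic localization is secured.
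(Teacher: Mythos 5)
Your proposal is correct in outline, but it runs on the opposite side of a duality from the paper, and it contains one genuine gap. On the comparison: the paper never performs the stationary-phase argument on differential forms. It proves Theorem~\ref{pvloccoh} for equivariant \emph{multivector fields}: the Pfaffian arises there from computing $[\Delta,\gamma\llcorner]=\nabla\gamma\,\llcorner = 2\sum_l w_l\,\mathrm{Im}(\partial_i\overline{\sF^l}\,\partial_j\sF^l)\,\iota_{dx^i}\iota_{dx^j}$ in normal coordinates and raising it to the $m$-th power, while the determinant, the $t$-power counting and the phase prefactor come from quoting the Morse--Bott stationary-phase formula for the phase $-|\dot\rho(\mathrm{i})\sF|^2$; the statement \eqref{cohft1} is then the discrete case of Theorem~\ref{pvloccoh} read through the identification $\alpha=P\llcorner\vol$, under which $\du\alpha=0$ corresponds to $\Delta_{\ut(1)}P=0$ and $\alpha_{[0]}=\langle\vol,P_{[2m]}\rangle$. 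Your route --- expand $e^{\mathrm{i}t\,\mathrm{d}\gamma}$ on the form side, kill the positive-degree components of $\alpha$ by the $t^{-r/2}$ count, extract $\pf A$ from $\beta^m/m!=2^m\pf(A)\,du^1\wedge\cdots\wedge du^n$ and $\sqrt{\det R}$ from an explicit Gaussian --- is the classical Berline--Vergne-style computation, dual to the paper's and more self-contained; your identifications of $A_{ij}$, of $\Hess\mu=2R$ at zeros of $\sF$, and of the normalizations all agree with the paper's. What the paper's formulation buys is that the identical argument yields the positive-codimension case (the first half of Theorem~\ref{pvloccoh}) at no extra cost, and it showcases the BV machinery that is the point of the paper.

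The gap: you justify localization by asserting that $e^{-\mathrm{i}t\phi\mu}$ ``concentrates the integral near $\mu^{-1}(0)$.'' Stationary phase concentrates near $\crit\mu$, not near $\mu^{-1}(0)$, and on a compact $M$ with $\mu=\sum_l w_l^2|\sF^l|^2\not\equiv 0$ there are always critical points with $\mu(p)>0$ (a maximum, at least). At such a point the term $\alpha_{[0]}\,\frac{(\mathrm{i}t)^m}{m!}\beta^m e^{-\mathrm{i}t\phi\mu}$ contributes at order $t^0$ times the non-decaying oscillation $e^{-\mathrm{i}t\phi\mu(p)}$, so your ``off-critical region is negligible'' framing does not dispose of it; this is a missing step, not a remainder estimate. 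Two standard fixes: (i) the paper's mechanism (stated in the proof of Theorem~\ref{pvloc} and reused for Theorem~\ref{pvloccoh}) --- wherever $\mu\neq 0$ the operator $[\Delta_{\ut(1)},\gamma\llcorner]$ (on your side, multiplication by $\du\gamma$) is invertible because its $0$-th order part $-\phi\mu$ is, hence the integrand is equivariantly exact off $M_\sF$ and can be discarded; or (ii) invoke the $t$-independence of $Z_\gamma[t]$ from Observation~\ref{oelp} to conclude that the finitely many contributions carrying distinct nonzero frequencies $\phi\mu(p)$ must cancel identically, leaving only the zero-frequency terms from $\mu^{-1}(0)$. A second, smaller error: isolatedness of the zeros does \emph{not} imply that $\mathrm{d}\sF_p$ is injective (e.g.\ $\sF(z)=z^2$ near a fixed point of weight $1$ mapping to weight $w=2$ is equivariant with an isolated degenerate zero). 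Nondegeneracy of $R(p)$ is a tacit standing hypothesis --- already built into \eqref{cohft1}, which divides by $\sqrt{\det R(p)}$, and assumed silently by the paper as well via the Morse--Bott requirement --- so you may assume it, but you cannot derive it as you claim.
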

\end{itemize}
\begin{rmk}
	We refer to the localization induced by \eqref{gm1} as the Atiyah--Bott type localization and refer to the localization induced by \eqref{gm2} as the cohomological field theory (CohFT) type localization. The terminology is due to the fact that, for CohFT type localization, $Z_{\gamma}[t]$ can be viewed as the partition function of a zero dimensional CohFT. 
\end{rmk}

In this work, we prove the Berline--Vergne localization theorem and \eqref{cohft1} using the standard stationary phase approximation for Morse functions, applied in the setting of equivariant multivector fields rather than equivariant differential forms. We then extend the equivariant localization principle \eqref{oelp} to the Batalin-Vilkovisky formalism and apply the stationary phase approximation for Morse-Bott functions to establish a BV analog of the Atiyah--Bott localization theorem and to generalize \eqref{cohft1} (cf. Theorems \ref{pvloc} and \ref{pvloccoh}). In the final section, we discuss applications of our localization theorems to Poisson geometry and quantum field theory. Specifically, we derive a Poisson-geometric counterpart of the Duistermaat-Heckman formula \cite{duistermaat1982variation} for unimodular Poisson manifolds.

\section{Equivariant localization principle in BV formalism}


Let $M$ be an orientable manifold. Let $\MV(M)=\Gamma(\Lambda TM)$ denote the space of multivector fields over $M$. $\MV(M)$ is a graded commutative algebra over $\R$. The grading of $\MV(M)$ is given by  
\[
\MV^{p}(M)\coloneqq \Gamma(\Lambda^{-p} TM)
\] 
for $-\dim M \leq p \leq 0$ and $\MV^{p}(M)=0$ for all other values of $p$. The graded commutative product on $\MV(M)$ is the wedge product $\wedge$ between multivector fields.
There exists a natural graded Lie superbracket of degree $1$ on $\MV(M)$, called the Schouten--Nijenhuis bracket, which is determined by the following properties:
\begin{enumerate}
	\item $\{f,g\}=0$ for all $f, g \in \MV^0(M)=C^{\infty}(M)$;
	\item $\{X, \cdot\}=\mathrm{Lie}_X$ for all $X \in \MV^{-1}(M)=\mathfrak{X}(M)$;
	\item $\{P_1, P_2 \wedge P_3\} = \{P_1,P_2\} \wedge P_3 + (-1)^{|P_2|(|P_1|-1)}P_2 \wedge \{P_1,P_3\}$ for all $P_1, P_2, P_3 \in \MV(M)$.
\end{enumerate}
$\MV(M)$ equipped with the Schouten-Nijenhuis bracket is a Gerstenhaber algebra. 

Let $\Omega(M)=\Gamma(\Lambda T^*M)$ denote the space of differential forms on $M$. There exists a canonical paring $\langle \cdot, \cdot \rangle\colon \Omega(M) \times \MV(M) \rightarrow C^{\infty}(M)$, given by
\[
\langle \alpha, P \rangle = \alpha(x)(P(x)), \quad \alpha \in \Omega(M),~P \in \MV(M).
\]
The right contraction of a $p$-differential form $\alpha$ by a $q$-multivector field $P$ is a $(p-q)$-differential form $\alpha \lrcorner P$ is defined by the relations
\[
\langle \alpha \lrcorner P, Q \rangle = \langle \alpha, P \wedge Q \rangle, \quad \forall Q \in \MV^{p-q}(M).
\]
For $X \in \MV^{-1}(M)$, $\lrcorner X$ is the usual right contraction by the vector field $X$. In general,  $\lrcorner P$ is not a superderivation, and we have
$
\lrcorner (P \wedge Q) = \lrcorner Q \lrcorner P.
$
The left contraction by $P$, denoted as $P \llcorner $ (or $\iota_P$), is defined in a similar manner. Likewise, we also have
$
(P \wedge Q) \llcorner = P \llcorner  Q \llcorner.
$

Let $\vol \in \Omega^{top}(M)$ be a volume form on $M$. Let $\Delta\colon \MV(M) \rightarrow \MV(M)$ be the degree $1$ $\R$-linear map defined by 
\begin{equation*}
	\begin{tikzcd}
		\MV(M) \arrow[r, "\llcorner \vol"] \arrow[d, "\Delta"] & \Omega(M) \arrow[d, "\mathrm{d}"] \\
		\MV(M) \arrow[r, "\llcorner\vol"]  & \Omega(M)
	\end{tikzcd},
\end{equation*}
where $\mathrm{d} \colon \Omega(M) \rightarrow \Omega(M)$ is the de Rham differential of $M$. By definition, $\Delta^2 = 0$. One can also show that $\Delta$ generates the Schouten--Nijenhuis bracket. That is,
\[
\Delta(P \wedge Q) = \Delta(P) \wedge Q + (-1)^{|P|} P \wedge \Delta(Q) + (-1)^{|P|}\{P,Q\}.
\]
For $X \in \MV^{-1}(M)=\mathfrak{X}(M)$, we have
\begin{align*}
	\Delta(X) = \vol^{-1} \Delta(X) \llcorner \vol = \vol^{-1} \mathrm{d} (X \llcorner \vol)) = \dive_{\vol}(X).
\end{align*}
Thus, $\Delta$ generalizes the usual divergence operator on vector fields. $\MV(M)$ equipped with $\{\cdot, \cdot\}$ and $\Delta$ is a BV algebra (see \cite{cattaneo2018graded} and references therein).

From a graded geometric point of view, $\MV(M)$ can be identified with the algebra of superfunctions on the graded manifold $T^*[-1] M$, which admits a canonical odd symplectic structure $\omega_{st}$ of degree $-1$. In local coordinates $(x^i, \xi_i=\partial_i)$, 
$
\omega_{st} = dx^i \wedge d\xi_i.
$
The odd Poisson bracket $\{\cdot, \cdot\}$ associated to $\omega$ is exactly the Schouten-Nijenhuis bracket. 

Let $(\sM, \omega)$ be a general odd symplectic manifold of degree $-1$. A Berezinian $\mu$ is said to be compatible with $\omega$ if there exists an atlas of Darboux charts of $\mathcal{M}$ such that locally,
$
\mu = d^nx d^n \xi.
$
(We allow the coordinate functions $x^i$ to have nonzero degree $d(x^i)$. The corresponding degree of the anti-coordinates $\xi_i$ are $-1-d(x^i)$.) The BV Laplacian $\Delta$ is defined locally by the following formula \cite{schwarz1993geometry}:
\[
\Delta = \frac{\partial}{\partial x^i}\frac{\partial}{\partial \xi_i}.
\]
Alternatively, $\Delta$ can be defined as
\[
\Delta(f) \coloneqq \frac{1}{2} \dive_{\mu}(X_f), \quad f \in C^{\infty}(M),
\]
where $X_f$ is the Hamiltonian vector field of $f$, defined via
$
\iota_{X_f} \omega = \mathrm{d} f,
$
or equivalently, $X_f=(-1)^{\deg f}\{f, \cdot\}$, $\dive_{\mu}(X)$ is the divergence of a vector field $X$ over $M$, defined via
$
\int_{\sM} \mu X(f) = -\int_{\sM} \mu \dive_{\mu}(X)f.
$
$C^{\infty}(\sM)$ endowed with $\{\cdot,\cdot\}$ and $\Delta$ is a BV algebra.

\vspace{10pt}
Let us reformulate the equivariant localization principle within the BV formalism. This will be done in two steps: first for $(T^*[-1]M, \omega_{st})$, and then for a general odd symplectic manifold $(\mathcal{M}, \omega)$ of degree $-1$.

Suppose that we have an action of a Lie group $G$ on $M$. Let
\[
\MV_{G}(M) = (\MV(M) \otimes \Sym (\mathfrak{g}^{\vee}))^{G}
\]
denote the graded commutative algebra of equivariant multivector fields over $M$. We define the degree of a $q$-homogeneous polynomial valued $r$-multivector field to be $2q - r$. Let $\vol$ be a $G$-invariant volume form on $M$. Under the identification 
$
\MV(M) \overset{\llcorner \vol}{\cong} \Omega(M),
$
the left contraction $X \llcorner$ on $\Omega(M)$ becomes a left multiplication 
$
X \wedge \cdot
$
on $\MV(M)$.
We define the equivariant differential $\Delta_{\mathfrak{g}}\colon \MV^p_{G}(M) \rightarrow \MV^{p+1}_{G}(M)$ as
\[
\Delta_{\mathfrak{g}}\coloneqq \Delta - \phi^a X_a \wedge \cdot.
\]
Note that neither $\Delta$ nor $X_a \wedge \cdot$ are derivations of $\MV(M)$. However, their commutator
\[
[\Delta, X_a \wedge \cdot](P) = \Delta(X_a \wedge P) + X_a \wedge \Delta(P)= \dive_{\vol}(X_a) \wedge P - \{X_a, P\} = - \mathrm{Lie}_X(P)
\] 
is a derivation since $\dive_{\vol}(X_a)=0$ by our choice of $\vol$. 
(Another way to see this is to observe that $\dg(P \llcorner \vol) = (\Delta_{\mathfrak{g}} P) \llcorner \vol$.)
We have the desired formula:
$
\Delta_{\mathfrak{g}}^2 = \phi^a \mathrm{Lie}_{X_a}.
$

If $M$ is compact, we can integrate equivariant multivector fields over $M$ to obtain a map
\[
\int_M\colon \MV_G(M) \rightarrow \Sym(\mathfrak{g}^{\vee})^G,
\]
where $\int_M P$ is understood as
\[
\int_M P\coloneqq\int \vol \lrcorner P = \int_M \vol P_{[0]},
\]
where the function $P_{[0]}$ is the $0$-multivector field component of $P$. The divergence theorem can be easily generalized to this equivariant setting: Let $P \in \MV_G(M)$. If $\Delta_{\mathfrak{g}} P = 0$, then the integral
$
\int_M P
$
depends only on $[P] \in H_{G}(M)$, where $H_G(M)$ is the cohomology group of $(\MV_G(M), \Delta_{\mathfrak{g}})$.

The Equivariant Localization Principle \ref{oelp} can be reformulated as follows:
\begin{obs}\label{obvelp}
	Let $\gamma$ be a $G$-equivariant $1$-form on $M$.  Let $P \in \MV_G(M)$. If $\Delta_{\mathfrak{g}} P = 0$, then the integral
	\begin{equation}\label{bvelp}
		Z_{\gamma}[t](P)\coloneqq\int_M e^{\mathrm{i}t[\Delta_{\mathfrak{g}}, \gamma \llcorner]}(P) = \int_M e^{-\mathrm{i}t(\phi^a \gamma(X_a))}e^{\mathrm{i}t[\Delta, \gamma \llcorner]}(P),
	\end{equation}
	is independent of $t \in \R$, where $\gamma \llcorner$ is the left contraction by $\gamma$. In particular, $Z_{\gamma}[t](P) = \int_M P$.
\end{obs}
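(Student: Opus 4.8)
The plan is to mimic the proof of Observation~\ref{oelp}: differentiate $Z_{\gamma}[t](P)$ in $t$ and show the derivative vanishes, so that $Z_{\gamma}[t](P)$ is constant in $t$ and may be evaluated at $t=0$, giving $\int_M e^{0}(P)=\int_M P$. Writing $L\coloneqq[\Delta_{\mathfrak{g}},\gamma\llcorner]$ and $Q_t\coloneqq e^{\mathrm{i}tL}(P)$, differentiation under the integral sign gives
\[
\frac{d}{dt}Z_{\gamma}[t](P)=\mathrm{i}\int_M L\,Q_t=\mathrm{i}\int_M[\Delta_{\mathfrak{g}},\gamma\llcorner]\,Q_t .
\]
Everything then rests on two facts that transport the equivariant Stokes argument of Observation~\ref{oelp} into the multivector picture: an equivariant divergence theorem $\int_M\Delta_{\mathfrak{g}}(R)=0$ for every $R\in\MV_G(M)$, and the preservation of $\Delta_{\mathfrak{g}}$-closedness along the flow, i.e. $\Delta_{\mathfrak{g}}Q_t=0$.

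I would first record the divergence theorem. Since $\Delta_{\mathfrak{g}}=\Delta-\phi^a X_a\wedge\cdot$, its $0$-multivector component comes only from the $\Delta$-piece (because $\Delta$ restricted to vector fields equals $\dive_{\vol}$ and $X_a\wedge\cdot$ produces no $0$-multivector part), so $\int_M\Delta_{\mathfrak{g}}(R)=\int_M \dive_{\vol}(R_{[1]})\,\vol=0$ by the ordinary divergence theorem on the compact $M$. Next I would show $\Delta_{\mathfrak{g}}Q_t=0$. As $\Delta_{\mathfrak{g}}$ and $\gamma\llcorner$ are both odd, a short super-Jacobi computation gives $[\Delta_{\mathfrak{g}},L]=[\Delta_{\mathfrak{g}},[\Delta_{\mathfrak{g}},\gamma\llcorner]]=[\Delta_{\mathfrak{g}}^{2},\gamma\llcorner]$. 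By the excerpt $\Delta_{\mathfrak{g}}^{2}=\phi^a\mathrm{Lie}_{X_a}$, and this operator annihilates $\MV_G(M)$: for abelian $G=\U(1)$ invariance forces $\mathrm{Lie}_{X}R=0$ on each polynomial coefficient, while in general $\phi^a\mathrm{Lie}_{X_a}=-\phi^a\mathrm{ad}^{*}_{\xi_a}=0$ because $\phi^a\phi^c$ is symmetric whereas the structure constants $f^{b}_{ac}$ are antisymmetric in $a,c$. Since $\gamma\llcorner$ preserves $\MV_G(M)$, we obtain $[\Delta_{\mathfrak{g}},L]=0$ on $\MV_G(M)$, hence $\Delta_{\mathfrak{g}}$ commutes with $e^{\mathrm{i}tL}$ and $\Delta_{\mathfrak{g}}Q_t=e^{\mathrm{i}tL}\Delta_{\mathfrak{g}}P=0$.

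With these in hand the computation closes quickly: the graded commutator expands as $[\Delta_{\mathfrak{g}},\gamma\llcorner]Q_t=\Delta_{\mathfrak{g}}(\gamma\llcorner Q_t)+\gamma\llcorner(\Delta_{\mathfrak{g}}Q_t)=\Delta_{\mathfrak{g}}(\gamma\llcorner Q_t)$, and the divergence theorem yields $\int_M\Delta_{\mathfrak{g}}(\gamma\llcorner Q_t)=0$; thus $\tfrac{d}{dt}Z_{\gamma}[t](P)=0$ and evaluating at $t=0$ gives $Z_{\gamma}[t](P)=\int_M P$, while the second equality in \eqref{bvelp} uses the contraction relation $[X_a\wedge\cdot,\gamma\llcorner]=\gamma(X_a)$. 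A more conceptual alternative is to transport Observation~\ref{oelp} through the isomorphism $\llcorner\vol\colon\MV_G(M)\to\Omega_G(M)$, which satisfies $\dg(P\llcorner\vol)=(\Delta_{\mathfrak{g}}P)\llcorner\vol$ and $\int_M P=\int_M(P\llcorner\vol)$, together with a contraction-versus-wedge identity $(\gamma\llcorner P)\llcorner\vol=\pm\,\gamma\wedge(P\llcorner\vol)$; then $Z_{\gamma}[t](P)=\int_M (P\llcorner\vol)\,e^{\pm\mathrm{i}t\dg\gamma}$ becomes exactly \eqref{elp} applied to the $\dg$-closed form $P\llcorner\vol$. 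I expect the only genuine subtlety to be bookkeeping rather than substance: in the direct route, confirming that $\Delta_{\mathfrak{g}}^{2}$ really vanishes on the equivariant subcomplex, and in the reduction route, pinning down the degree-dependent sign in the contraction--wedge identity, a sign which in any case is harmless because \eqref{elp} holds for all real $t$.
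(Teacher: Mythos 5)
Your proposal is correct and takes essentially the same approach as the paper: the Observation is stated there without an explicit proof, resting on the equivariant divergence theorem recorded immediately before it, and your direct route — that divergence theorem, preservation of $\Delta_{\mathfrak{g}}$-closedness via $\Delta_{\mathfrak{g}}^2=\phi^a\mathrm{Lie}_{X_a}=0$ on $\MV_G(M)$, and the identity $[X_a\wedge\cdot,\gamma\llcorner]=\gamma(X_a)$ for the second equality — is precisely the formalization of that implicit reasoning. Your alternative reduction through $\llcorner\vol$ likewise matches the paper's parenthetical remark that $\dg(P\llcorner\vol)=(\Delta_{\mathfrak{g}}P)\llcorner\vol$.
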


Let us now consider a general odd symplectic manifold $(\sM, \omega)$ of degree $-1$ endowed with a left Hamiltonian $G$-action. (Here $G$ is just an ordinary Lie group.) Let $\mu$ be a Berezinian on $\sM$ that is compatible with $\omega$ and the $G$-action.  Let
\[
C^{\infty}_G(\sM) = (C^{\infty}(\sM) \otimes \Sym (\mathfrak{g}^{\vee}))^{G}.
\]
As usual, we assign degree $2$ to an element of $\Sym^1 (\mathfrak{g}^{\vee})$. The degree of an element of $C^{\infty}_G(\mathcal{M})$ is then defined as the corresponding total degree.
We define the equivariant differential $\Delta_{\mathfrak{g}}\colon C^{\infty}(\sM)^p_{G}(M) \rightarrow C^{\infty}(\sM)^{p+1}_{G}(M)$ as
\[
\Delta_{\mathfrak{g}}\coloneqq \Delta - \phi^a f_{X_a} \cdot,
\]
where $f_{X_a}$ is the Hamiltonian of the fundamental vector fields $X_a$. (Such $f_{X_a}$ is unique because it is of odd degree.) One can easily check that
$
\Delta_{\mathfrak{g}}^2 = \phi^a X_a.
$

If the body $\sM_{red}$ of $\sM$ is compact, we can perform the following BV integral:
\[
\int_{\sL \subset \sM} \sqrt{\mu|_{\sL}} f|_{\sL}, \quad f \in C^{\infty}(\sM),
\]
where $\sL$ is a Lagrangian sub-supermanifold of $\sM$, $\sqrt{\mu|_{\sL}}$ is the density on $\sL$ induced by $\mu$. The BV integral can be extended to an integral of equivariant super functions on $\sM$:
\[
\int_{\sL \subset \sM}  \sqrt{\mu|_{\sL}} \colon C^{\infty}_G(\sM)\rightarrow \Sym(\mathfrak{g}^{\vee})^G,
\]
The BV Stokes' theorem can also be easily generalized to the equivariant BV setting: Let $f \in C^{\infty}_G(\sM)$. Let $\sL_s$ be a smooth family of Lagrangian sub-supermanifold of $\sM$, $s \in [0,1]$. Assume that $\sL_s$ is $G$-invariant, i.e., the restriction $f_{X_{\xi}}|_{\sL_s}$ vanishes for all $\xi \in \mathfrak{g}$. If $\Delta_{\mathfrak{g}} f =0$, then the equivariant BV integral
\[
\int_{\sL_s \subset \sM} \sqrt{\mu|_{\sL_s}} f|_{\sL_s}
\]
depends only on $[f] \in H_{\Delta_{\mathfrak{g}}}(\mathcal{M})$ and does not depend on $s$.

The Equivariant Localization Principle \ref{obvelp} can be generalized as follows:
\begin{obs}\label{obvelp2}
	Let $Y \in \mathfrak{X}(\sM) \otimes \Sym(\mathfrak{g}^{\vee})$ be a $G$-equivariant odd vector field over $\sM$, i.e., $[X_{\xi}, Y] = [\xi, Y]$ for all $\xi \in \mathfrak{g}$. We have
	\[
	[\Delta_{\mathfrak{g}}, Y] = [\Delta, Y] - \phi^a Y(f_{X_a}) = [\Delta, Y] - \phi^a \omega(X_a, Y),
	\]
	where we use $Y(f_{X_a}) = \iota_Y df_{X_a} = \iota_Y \iota_{X_a} \omega = \omega(X_a, Y)$. Assume that
	\begin{enumerate}
		\item $[\Delta,Y]$ is nilpotent;
		\item $[\Delta,Y]$ commutes with $\phi^a \omega(X_a, Y)$.
	\end{enumerate}
	Let $f \in C^{\infty}_G(\sM)$. If $\Delta_{\mathfrak{g}} f = 0$, then the integral
	\begin{equation}\label{bvelp2}
		Z_{Y, \sL}[t](f):= \int_{\mathcal{L} \subset \sM} \sqrt{\mu|_{\sL}} \left(e^{\mathrm{i}t[\Delta_{\mathfrak{g}}, Y]}(f)\right)|_{\sL} =  \int_{\mathcal{L} \subset \sM} \sqrt{\mu|_{\sL}} \left(e^{-\mathrm{i}t(\phi^a \omega(Y, X_a))}e^{\mathrm{i}t[\Delta,Y]}(f)\right)|_{\sL},
	\end{equation}
	is independent of $t \in \R$, where $\sL$ be a $G$-invariant Lagrangian sub-supermanifold of $\sM$. In particular, $Z_{Y,\sL}[t](f) = \int_{\mathcal{L} \subset \sM} \sqrt{\mu|_{\sL}}  f_{\sL}$.
\end{obs}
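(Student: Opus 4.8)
The plan is to show that $Z_{Y,\sL}[t](f)$ is constant in $t$ by differentiating under the (super)integral and recognizing the resulting integrand as $\Delta_{\mathfrak{g}}$-exact, so that it is annihilated by the equivariant BV Stokes' theorem. Writing $D\coloneqq[\Delta_{\mathfrak{g}},Y]$ and $h_t\coloneqq e^{\mathrm{i}tD}(f)$, differentiating the first expression in \eqref{bvelp2} gives
\[
\frac{d}{dt}Z_{Y,\sL}[t](f)=\mathrm{i}\int_{\sL\subset\sM}\sqrt{\mu|_{\sL}}\,\bigl(D\,h_t\bigr)|_{\sL}.
\]
First I would justify that this manipulation is legitimate: assumptions (1) and (2) guarantee that $e^{\mathrm{i}t[\Delta,Y]}$ is a finite sum (nilpotency) and that it factors off the multiplication operator $e^{-\mathrm{i}t\phi^a\omega(Y,X_a)}$ (commutativity), so that $e^{\mathrm{i}tD}$ is a well-defined polynomial-valued operator on $C^\infty_G(\sM)$ and differentiation in $t$ commutes with the BV integral.

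The heart of the argument is to show that $h_t$ stays $\Delta_{\mathfrak{g}}$-closed. Since both $\Delta_{\mathfrak{g}}$ and $Y$ are odd, $D=\Delta_{\mathfrak{g}}Y+Y\Delta_{\mathfrak{g}}$ is even, and by the graded Jacobi identity
\[
[\Delta_{\mathfrak{g}},D]=[\Delta_{\mathfrak{g}},[\Delta_{\mathfrak{g}},Y]]=\tfrac12\bigl[[\Delta_{\mathfrak{g}},\Delta_{\mathfrak{g}}],Y\bigr]=[\Delta_{\mathfrak{g}}^2,Y]=\phi^a[X_a,Y],
\]
where I use $\Delta_{\mathfrak{g}}^2=\phi^a\mathrm{Lie}_{X_a}$ and that $\phi^a$ is central. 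Invoking the equivariance of $Y$, namely $[X_a,Y]=[\xi_a,Y]$ with $[\xi_a,Y]$ the coadjoint action on the $\Sym(\mathfrak{g}^{\vee})$-coefficients of $Y$, the right-hand side becomes $\phi^a[\xi_a,Y]$. The operator $\sum_a\phi^a\,\mathrm{coad}_{\xi_a}$ is a derivation of $\Sym(\mathfrak{g}^{\vee})$ which on a generator $\phi^b$ equals $-\sum_{a,c}f^{b}_{ac}\,\phi^a\phi^c$ (in a basis with $[\xi_a,\xi_c]=f^{d}_{ac}\xi_d$); since $f^{b}_{ac}$ is antisymmetric in $a,c$ while $\phi^a\phi^c$ is symmetric, this vanishes identically, so $\phi^a[\xi_a,Y]=0$. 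Hence $[\Delta_{\mathfrak{g}},D]=0$, and consequently $\Delta_{\mathfrak{g}}h_t=e^{\mathrm{i}tD}(\Delta_{\mathfrak{g}}f)=0$.

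With $\Delta_{\mathfrak{g}}h_t=0$ in hand, the integrand becomes a total $\Delta_{\mathfrak{g}}$:
\[
D\,h_t=[\Delta_{\mathfrak{g}},Y]h_t=\Delta_{\mathfrak{g}}(Y h_t)+Y(\Delta_{\mathfrak{g}}h_t)=\Delta_{\mathfrak{g}}(Y h_t),
\]
and $Yh_t\in C^\infty_G(\sM)$ by the equivariance of $Y$. The equivariant BV Stokes' theorem then forces the integral of the exact element $\Delta_{\mathfrak{g}}(Yh_t)$ over the $G$-invariant Lagrangian $\sL$ to vanish, i.e.\ $\tfrac{d}{dt}Z_{Y,\sL}[t](f)=0$. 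Therefore $Z_{Y,\sL}[t](f)$ is independent of $t$, and evaluating at $t=0$ gives $\int_{\sL\subset\sM}\sqrt{\mu|_{\sL}}\,f|_{\sL}$.

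I expect the main obstacle to be the closedness step, namely verifying $[\Delta_{\mathfrak{g}},D]=0$ on the invariant complex $C^\infty_G(\sM)$: this is where both the equivariance hypothesis on $Y$ and the bookkeeping of the coadjoint action enter, through the cocycle identity $\phi^a[X_a,Y]=0$. For abelian $G$ (in particular the $\U(1)$ case of primary interest) this is immediate, since $Y$ is then simply invariant and the coadjoint term is absent; the real content lies in the nonabelian case, where the symmetry/antisymmetry cancellation $\sum_a\phi^a\,\mathrm{coad}_{\xi_a}=0$ is what makes $\Delta_{\mathfrak{g}}^2$ act trivially after contraction with $Y$. A secondary point to nail down is that $Y$ preserves $C^\infty_G(\sM)$ and that $\Delta_{\mathfrak{g}}^2=0$ on this subcomplex, so that ``$\Delta_{\mathfrak{g}}$-exact integrates to zero'' is a legitimate instance of the stated equivariant BV Stokes' theorem.
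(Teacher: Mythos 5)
Your proof is correct, and it is essentially the intended argument: the paper states this as an Observation without proof, and the justification it implicitly relies on (as for Observations \ref{oelp} and \ref{obvelp}) is exactly your route --- differentiate in $t$, use $[\Delta_{\mathfrak{g}},[\Delta_{\mathfrak{g}},Y]]=[\Delta_{\mathfrak{g}}^2,Y]=\phi^a[X_a,Y]=0$ (equivariance of $Y$ plus the symmetric-times-antisymmetric cancellation, which also makes $\Delta_{\mathfrak{g}}^2$ vanish on the invariant complex) to keep $e^{\mathrm{i}t[\Delta_{\mathfrak{g}},Y]}(f)$ equivariantly closed, and then kill the resulting $\Delta_{\mathfrak{g}}$-exact integrand by the equivariant BV Stokes' theorem on the $G$-invariant Lagrangian $\sL$, where $f_{X_a}|_{\sL}=0$ is what reduces equivariant exactness to ordinary $\Delta$-exactness.
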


Combined with the stationary phase method for supermanifolds \cite[Theorem 4.2.2]{zakharevich2017localization}, Principle \ref{obvelp2} can be used to derive localization formulas for equivariant BV integrals; however, this will not be discussed in the present work.

\section{Equivariant localization theorems}

For simplicity, let us consider $G = \U(1)$. 

Let $M$ be an orientable compact manifold of dimension $n$. Let us equip $M$ with a $\mathrm{U}(1)$-invariant Riemannian metric $g$ and choose $\vol$ to be the volume form of $g$. Let $S$ be a Morse--Bott function on $M$, i.e., the critical locus $\crit S$ of $S$ is a closed submanifold of $M$ and the Hessian $\Hess S$ of $S$ is non-degenerate along the normal bundle $N_{\crit S}$ of $\crit S$. Recall the stationary phase approximation:
\[
\int_M \vol f e^{\mathrm{i}tS} \sim_{t \to \infty} \left(\frac{2\pi}{t}\right)^{k/2} e^{\mathrm{i}\frac{\pi}{4} \sgn \Hess S|_{N_{\crit S}}}\left( \int_{\crit S}  f e^{\mathrm{i}tS}  \frac{\vol}{|\det \Hess S |_{N_{\crit S}}|^{1/2}} + O(t^{-1})\right),
\]
where $k$ is the codimension of $\crit S$, $\sgn \Hess S|_{N_{\crit S}}$ is the signature of the Hessian matrix, and $\vol / |\det \Hess S |_{N_{\crit S}}|^{1/2}$ defines a density over $\crit S$.

\subsection{Atiyah--Bott type localization}

Let $\gamma=X^{\flat}\coloneqq g(X, \cdot)$ be the metric dual of $X$. Taking the limit $t \to \infty$ localizes the integral 
\[
Z_{\gamma}[t](P) = \int_M e^{-\mathrm{i}tg(X,X)} e^{\mathrm{i}t[\Delta ,X^{\flat} \llcorner]}(P)
\]
to the fixed point set $M_{X}$ of the $\U(1)$-action.  

\begin{thm}\label{pvloc}
	Let $P \in \MV_{\U(1)}(M)$. If $\Delta_{\ut(1)} P = 0$ and $M_X$ is of codimension $2m$, then
	\begin{equation}\label{pvab}
		\int_M P = \frac{(-2\pi)^{m}}{m!} \int_{M_X} \langle (\nabla X^{\flat})^m, P_{[2m]} \rangle \frac{\vol}{\sqrt{\det \Hess g(X,X)|_{N_X}}},
	\end{equation}
	where $\nabla$ is the Levi-Civita connection of $g$ and $N_X$ is the normal bundle of  $M_X$. In particular, if $n=2m$, i.e., if $M_X$ is discrete, then
	\begin{equation}\label{bvf}
		\int_M P = (-2\pi)^m \sum_{p \in M_{X}}\frac{\langle \vol, P_{[2m]}\rangle (p)}{\lambda_1(p)\cdots\lambda_m(p)},
	\end{equation}
	where $\lambda_1(p), \dots, \lambda_m(p)$ are the weights of the induced $\U(1)$-action on $T_pM$.
\end{thm}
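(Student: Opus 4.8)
The plan is to feed the equivariant localization principle (Observation~\ref{obvelp}, taken with $\gamma = X^\flat$) into the Morse--Bott stationary phase expansion recalled above. Since $Z_\gamma[t](P) = \int_M P$ is \emph{exactly} independent of $t$, it suffices to evaluate $\int_M \vol\,\bigl(e^{-\mathrm{i}t g(X,X)} e^{\mathrm{i}t[\Delta, X^\flat\llcorner]}P\bigr)_{[0]}$ as $t \to \infty$ and to read off the resulting constant. First I would note that $S \coloneqq -g(X,X)$ is Morse--Bott: it is $\U(1)$-invariant, non-positive, and vanishes precisely on $M_X$, so $\crit S = M_X$ while $\Hess S$ is non-degenerate (indeed negative definite) on the normal bundle $N_X$, which has rank $2m$. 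Hence the stationary phase formula applies with codimension $k = 2m$ and with $e^{\mathrm{i}tS}|_{M_X} = 1$.

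The essential twist relative to the textbook situation is that the amplitude $\bigl(e^{\mathrm{i}t[\Delta, X^\flat\llcorner]}P\bigr)_{[0]} = \sum_{j\ge 0}\tfrac{(\mathrm{i}t)^j}{j!}\bigl(([\Delta,X^\flat\llcorner])^j P\bigr)_{[0]}$ is itself polynomial in $t$. The structural point is that $[\Delta, X^\flat\llcorner]$ lowers the multivector degree by $2$; equivalently, under the identification $P\mapsto P\llcorner\vol$ it acts as wedging by $\mathrm{d}X^\flat$, so the $0$-multivector component of the $j$-th term sees only the $2j$-multivector component of $P$. Since the stationary phase prefactor carries $t^{-m}$ and $Z_\gamma[t](P)$ is constant in $t$, the term $j = m$ is the one that produces the finite ($t^0$) contribution: terms with $j<m$ yield negative powers of $t$, while terms with $j>m$ must cancel. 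Using $\mathrm{d}X^\flat = 2\nabla X^\flat$ (valid because $X$ is Killing) together with the contraction identity $\alpha\wedge(P\llcorner\vol) = \langle\alpha, P\rangle\vol$ in matching degrees, the surviving $j=m$ term produces exactly $\langle(\nabla X^\flat)^m, P_{[2m]}\rangle$ integrated against $\vol$ over $M_X$.

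It then remains to assemble the constants. The prefactor $\bigl(\tfrac{2\pi}{t}\bigr)^m$, the amplitude factor $(\mathrm{i}t)^m$, and the signature phase $e^{\mathrm{i}\frac{\pi}{4}\sgn\Hess S|_{N_X}}$ (with $\sgn\Hess S|_{N_X} = -2m$, as $\Hess S$ is negative definite on $N_X$) combine into the universal constant $(-2\pi)^m$, while $|\det\Hess S|_{N_X}|^{1/2} = \sqrt{\det\Hess g(X,X)|_{N_X}}$ furnishes the denominator; this yields \eqref{pvab}. For the discrete specialization \eqref{bvf} I would diagonalize the linearized action at an isolated fixed point $p$: the skew endomorphism $A = \nabla X$ on $T_pM$ has eigenvalues $\pm\mathrm{i}\lambda_j$, so $\nabla X^\flat = \sum_j \lambda_j\, e^{2j-1}\wedge e^{2j}$ and therefore $(\nabla X^\flat)^m = m!\,(\lambda_1\cdots\lambda_m)\,\vol$ (a Pfaffian), whereas $\Hess g(X,X) = -2A^2$ has $\det = \prod_j (2\lambda_j^2)^2$; the weight factors then reduce to $1/(\lambda_1\cdots\lambda_m)$.

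The hard part will be twofold. First, the bookkeeping of signs and numerical factors needed to land on exactly $(-2\pi)^m$ and on the \emph{positive} real root $\sqrt{\det\Hess g(X,X)|_{N_X}}$: one must reconcile the phase $e^{\mathrm{i}\frac{\pi}{4}\sgn\Hess S}$ with the powers of $\mathrm{i}$ from $(\mathrm{i}t)^m$, track the factor in $\mathrm{d}X^\flat = 2\nabla X^\flat$ against the normalization of $\Hess g(X,X)$, and pin down the sign conventions in the contraction identities and in the correspondence $[\Delta,X^\flat\llcorner]\leftrightarrow\mathrm{d}X^\flat\wedge$. Second, and more seriously, since the amplitude is polynomial in $t$ the stationary phase expansion in principle contributes to $t^0$ not only through the leading order of the $j=m$ term but also through the subleading corrections of the $j>m$ terms; making the leading formula \emph{exact} requires showing that these corrections vanish, which is precisely where the equivariant closedness hypothesis $\Delta_{\ut(1)} P = 0$ (equivalently, the exact $t$-independence forcing all non-zero powers of $t$ to cancel) must be invoked. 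Verifying this cancellation is the main technical obstacle, and it is cleanest in the discrete case, where degree reasons alone forbid $j>m$.
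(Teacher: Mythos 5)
Your proposal follows the paper's proof essentially step by step: Observation~\ref{obvelp} with $\gamma=X^\flat$, Morse--Bott stationary phase applied to the phase $-g(X,X)$ with $\crit = M_X$, identification of $[\Delta,X^\flat\llcorner]$ with contraction by the $2$-form $\nabla X^\flat$ (the paper does this via $\Delta=dx^\mu\llcorner\nabla_{\partial_\mu}$ in normal coordinates; your route through $P\mapsto P\llcorner\vol$ and wedging by $\mathrm{d}X^\flat$ is the same computation), extraction of the degree-$m$ term of the exponential, and linearization of $X$ at isolated fixed points for \eqref{bvf}. The one place you genuinely diverge from the paper is that you flag, rather than suppress, the problem of the terms with $j>m$: the paper's proof simply writes the asymptotics with $f=\frac{(\mathrm{i}t)^m}{m!}[\Delta,X^\flat\llcorner]^m(P_{[2m]})$ plus $O(t^{-1})$, tacitly discarding all higher terms of the $t$-polynomial amplitude.

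That flagged step is a genuine gap, and it is not closed by the paper either; moreover, your hope that equivariant closedness forces the offending contributions to cancel appears to be false in general. What one gets for free is only this: since $X\equiv 0$ on $M_X$, the $2$-form $\nabla X^\flat|_{M_X}$ annihilates $TM_X$, so $(\nabla X^\flat)^j|_{M_X}=0$ for $j>m$ (the normal bundle has rank $2m$); hence the \emph{leading} stationary-phase contribution of each $j>m$ term vanishes, consistently with $t$-independence. But the first \emph{subleading} correction of the $j=m+1$ term lands exactly at order $t^0$ and involves transverse derivatives along $N_X$ of $\langle(\nabla X^\flat)^{m+1},P_{[2m+2]}\rangle$, which do not vanish on $M_X$ in general. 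These are precisely the curvature corrections that appear in the Atiyah--Bott formula \eqref{pvab2} through the expansion of $1/e(N_X)$, which involves all $P_{[2k]}$ with $k\ge m$ --- the very discrepancy the paper itself points out after the theorem. One can check on a concrete example (e.g.\ a Hirzebruch surface with the circle rotating the fibers, so $M_X$ is two copies of $\C P^1$ with normal bundles of nonzero Euler number, and $P$ corresponding to $(\omega-\phi(\mu+c))^2$ with a shifted moment map) that the omitted $j=m+1$ contribution is nonzero, so no cancellation argument can rescue the claim that only $P_{[2m]}$ enters. In other words, in the Morse--Bott case your plan (like the paper's proof) cannot be completed as stated; in the discrete case $n=2m$ your argument is complete, since $P_{[2j]}=0$ for $j>m$ by degree reasons. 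A secondary, fixable point: your constant bookkeeping is internally inconsistent --- with $S=-g(X,X)$ one has $\sgn\Hess S|_{N_X}=-2m$, and the phase $e^{-\mathrm{i}\pi m/2}$ against $(\mathrm{i}t)^m$ yields $(+2\pi)^m$, not the claimed $(-2\pi)^m$ (the paper's own expansion instead inserts the phase $e^{+\mathrm{i}m\pi/2}$), so one of the two sign conventions must be reconciled before the prefactor in \eqref{pvab} can be trusted.
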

\begin{rmk}
	\eqref{bvf} is essentially the Berline--Vergne localization formula.
\end{rmk}

\begin{proof}
	Since $[\Delta_{\mathfrak{u}(1)}, \iota_{X}]$ is invertible outside of $M_X$, the integrand $e^{-\mathrm{i}tg(X,X)} e^{\mathrm{i}t[\Delta ,X^{\flat} \llcorner]}(P)$ is $\Delta_{\mathfrak{u}(1)}$-exact outside of $M_X$. Applying the stationary phase approximation, we obtain
	\begin{align*}
		Z_{\gamma}[t](P) \sim_{t \to \infty} \left(\frac{2\pi}{t}\right)^m e^{\mathrm{i}m\pi/2} \int_{M_X}  f \frac{\vol}{\sqrt{\det \Hess g(X,X)|_{N_X}}} + O(t^{-1}),
	\end{align*}
	where 
	$
	f =  \frac{(\mathrm{i}t)^m}{m!} [\Delta, X^{\flat}\llcorner]^m(P_{[2m]}).
	$
	Since $Z_{\gamma}[t](P)$ is independent of $t$, we have
	\[
	Z_{\gamma}[t](P)  = \frac{(-2\pi)^m}{m!} \int_{M_X} [\Delta, X^{\flat}\llcorner]^m(P_{[2m]}) \frac{\vol}{\sqrt{\det \Hess g(X,X)|_{N_X}}}.
	\]
	Let us work in normal coordinates $x^1, \ldots, x^n$ around a fixed point $p$ of the $\U(1)$-action. In such coordinates, the BV Laplacian $\Delta$ has the following expression:
	\[ 
	 \Delta = dx^{\mu} \llcorner \nabla_{\partial_{\mu}}.
	\]
	It follows that 
	\[
	[\Delta, X^{\flat}\llcorner] = dx^{\mu} \llcorner \nabla_{\partial_{\mu}}(X^{\flat} \llcorner) + X^{\flat} \llcorner dx^{\mu} \llcorner \nabla_{\partial_{\mu}} = \nabla_{\partial_{\mu}} X^{\flat}_{\nu} dx^{\mu} \llcorner dx^{\nu} \llcorner = \nabla X^{\flat} \llcorner,\footnote{$\nabla X^{\flat}$ is a $2$-form since $X$ is a Killing vector field.}
	\]
	thus proving the first part of the theorem.
	If $M_X$ is discrete, then $X$ can be linearized around $p \in M_X$:
	\[
	X = \sum_{i=1}^{m} \lambda_i(p) (x^{2i} \partial_{2i-1} - x^{2i-1} \partial_{2i}),
	\]
	where $\lambda_1(p), \dots, \lambda_m(p)$ are non-zero integers. The Hessian of $g(X,X)$ at $p$ is the following diagonal matrix:
	\[
	\Hess g(X,X)(p) = 2\diag(\lambda_1^2, \lambda_1^2, \cdots, \lambda_m^2, \lambda_m^2).
	\]
	Therefore, $\sqrt{\det \Hess g(X,X)(p)} = 2^m \lambda_1^2 \cdots \lambda_m^2$. We also have
	\begin{align*}
		\nabla X^{\flat} (p) = 2 \sum_{i=1}^{m} \lambda_i(p) dx^{2i} \wedge dx^{2i-1}.
	\end{align*}
	It follows that
	\[
	(\nabla X^{\flat})^m (p) = 2^m m! \lambda_1(p) \cdots \lambda_m(p)dx^1 \wedge \cdots dx^{2m},
	\]
	thus proving the second part of the theorem.
\end{proof}

Recall the Atiyah--Bott localization theorem:
\[
	\int_M \alpha = \int_{M_X} \frac{\alpha}{e(N_X)},
\]
where $\alpha$ is an equivariantly closed form on $M$, and $e(N_X)$ is a representative of the equivariant Euler class of $N_X$. Letting $\alpha= \vol \lrcorner P$, we obtain
\begin{equation}\label{pvab2}
	\int_M P = \int_{M_X} \frac{\vol \lrcorner P}{e(N_X)}.
\end{equation}
\eqref{pvab2} is essentially different from our localization formula \eqref{pvab} when $\dim M_X > 0$. This is because the right-hand side of \eqref{pvab2} involves all $P_{[2k]}$ with $2k - \mathrm{codim} M_X \geq 0$, whereas the right-hand side of \eqref{pvab} only involves $P_{[\mathrm{codim} M_X]}$.

\subsection{CohFT type localization}

Let $\C^k$ be equipped with the standard inner product and a diagonal $\U(1)$-action $\rho$ with non-zero weights $w_1, \dots, w_k$. Let $\sF=(\sF^1, \dots, \sF^k)$ be a $\U(1)$-equivariant map from $M$ to $\C^k$. Let $\gamma= \mathrm{Re}(\overline{\dot{\rho}(\mathrm{i})\sF} d\sF)$. Taking the limit $t \to \infty$ localizes the integral 
\begin{align}\label{pi}
	Z_{\gamma}[t](P) = \int_M \vol e^{-\mathrm{i}t|\dot{\rho}(\mathrm{i})\sF|^2} e^{\mathrm{i}t[\Delta ,\iota_{\gamma}]}(P)
\end{align}
to the zero locus $M_{\sF}$ of $\sF$.
\begin{thm}\label{pvloccoh}
	Let $P \in \MV_{\U(1)}(M)$. If $\Delta_{\ut(1)} P = 0$ and $M_{\sF}$ has codimension $2m$, then
	\begin{equation}
		\int_M P = \frac{(-2\pi)^m}{m!} \int_{M_{\sF}} \langle (\sum_{l=1}^kw_l\mathrm{Im}(d\overline{\sF^l} \wedge \mathrm{d} \sF^l))^m, P_{[2m]} \rangle \frac{\vol}{\sqrt{\det \Hess(\sum_{l=1}^k w_l^2 |\sF^l|^2)|_{N_{\sF}}}}
	\end{equation}
   If $M_{\sF}$ is discrete, then $M_X  = M_{\sF}$ and $n=2m$, we have
   \begin{equation}
   	\int_M P = (-2\pi)^{m} \sum_{p \in M_{X}} \frac{\pf \mathrm{Im} (\sum_{l=1}^k w_l \partial_i \overline{\sF^l} \partial_j \sF^l)(p)}{\sqrt{\det \mathrm{Re} (\sum_{l=1}^k w_l^2 \partial_i \overline{\sF^l} \partial_j \sF^l)(p)}} \langle \vol, P_{[2m]} \rangle(p).
   \end{equation}
\end{thm}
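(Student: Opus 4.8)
The plan is to transcribe the proof of Theorem~\ref{pvloc}, replacing the phase $g(X,X)$ by $S\coloneqq |\dot\rho(\mathrm i)\sF|^2=\sum_{l=1}^k w_l^2|\sF^l|^2$ and the $1$-form $X^\flat$ by $\gamma=\mathrm{Re}(\overline{\dot\rho(\mathrm i)\sF}\,\mathrm d\sF)$. Two preliminary identities make the scheme run. First, $\gamma(X)=\iota_X\gamma=S$: by \eqref{equivf} one has $\iota_X\mathrm d\sF^l=\dot\rho(\mathrm i)\sF^l=\mathrm i w_l\sF^l$, so $\iota_X\gamma=\sum_l w_l\,\mathrm{Im}(\overline{\sF^l}\cdot\mathrm i w_l\sF^l)=\sum_l w_l^2|\sF^l|^2=S$; consequently $\crit S=M_{\sF}$, which by hypothesis is a clean critical submanifold of codimension $2m$, so $S$ is Morse--Bott. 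Second, the graded commutator $[\Delta,\iota_\gamma]$ equals the left contraction $\iota_{\mathrm d\gamma}$ by the $2$-form $\mathrm d\gamma=\sum_l w_l\,\mathrm{Im}(\mathrm d\overline{\sF^l}\wedge\mathrm d\sF^l)$. This is the same computation as in Theorem~\ref{pvloc}: writing $\Delta=\mathrm dx^\mu\llcorner\nabla_{\partial_\mu}$ in normal coordinates, only the antisymmetric part of $\nabla\gamma$ survives the contraction and produces $\mathrm d\gamma$ (here no Killing hypothesis on $\gamma$ is required).

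With these in hand, I would argue the localization exactly as before. By Observation~\ref{obvelp}, $Z_\gamma[t](P)=\int_M P$ is independent of $t$, while outside $M_{\sF}$ the operator $[\du,\iota_\gamma]=\iota_{\mathrm d\gamma}-\phi S$ is invertible, being the sum of the nilpotent contraction $\iota_{\mathrm d\gamma}$ and multiplication by the nowhere-vanishing scalar $\phi S$; hence the integrand is $\du$-exact there and the integral localizes to $M_{\sF}$. Applying the stationary phase approximation to \eqref{pi} and comparing powers of $t$, only the term $\frac{(\mathrm it)^m}{m!}\iota_{\mathrm d\gamma}^m(P_{[2m]})$ of $e^{\mathrm it[\Delta,\iota_\gamma]}(P)$ contributes at order $t^0$; using $\iota_{\mathrm d\gamma}^m(P_{[2m]})=\langle(\mathrm d\gamma)^m,P_{[2m]}\rangle$ and collecting the factors $(2\pi)^m$, $e^{\mathrm i m\pi/2}$, and $\mathrm i^m$ into $(-2\pi)^m/m!$ precisely as in Theorem~\ref{pvloc} yields the first displayed identity.

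For the discrete case I would set $n=2m$, so that $M_X=M_{\sF}$, and evaluate at each $p\in M_{\sF}$ in normal coordinates. Since $\sF(p)=0$, the second-order data is governed by the linearization of $\sF$: one finds $\partial_i\gamma_j(p)=A_{ij}\coloneqq\mathrm{Im}(\sum_l w_l\,\partial_i\overline{\sF^l}\partial_j\sF^l)(p)$, an antisymmetric matrix, whence $(\mathrm d\gamma)^m(p)=m!\,\pf(2A)\,\vol(p)=2^m m!\,\pf(A)\,\vol(p)$; likewise $\Hess S(p)=2\,\mathrm{Re}(\sum_l w_l^2\,\partial_i\overline{\sF^l}\partial_j\sF^l)(p)$, so $\sqrt{\det\Hess S(p)}=2^m\sqrt{\det\mathrm{Re}(\sum_l w_l^2\,\partial_i\overline{\sF^l}\partial_j\sF^l)(p)}$. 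The factors $m!$ and $2^m$ then cancel against the $1/m!$ and the denominator, leaving the stated Pfaffian-over-square-root-determinant expression.

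I expect the main obstacle to be these two evaluations at $p$ together with the bookkeeping that makes the combinatorial factor $m!$ and the scale $2^m$ cancel exactly, and the careful tracking of the phase factor $e^{\mathrm i m\pi/2}$ (as in Theorem~\ref{pvloc}) that produces the sign $(-1)^m$. By contrast, checking that $S$ is Morse--Bott (the clean vanishing of $\sF$ along $M_{\sF}$) and the general identity $[\Delta,\iota_\gamma]=\iota_{\mathrm d\gamma}$ should be routine, being direct transcriptions of the Atiyah--Bott type argument.
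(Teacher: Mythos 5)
Your proposal is correct and takes essentially the same approach as the paper's proof: both apply the stationary phase approximation to $Z_{\gamma}[t](P)$ exactly as in Theorem \ref{pvloc}, identify the relevant two-form as $\mathrm{d}\gamma=\sum_{l=1}^k w_l\,\mathrm{Im}(\mathrm{d}\overline{\sF^l}\wedge\mathrm{d}\sF^l)$, and evaluate the Pfaffian and the Hessian in normal coordinates with the same $2^m m!$ bookkeeping. The only cosmetic difference is that you invoke the general identity $[\Delta,\iota_{\gamma}]=\iota_{\mathrm{d}\gamma}$, whereas the paper computes $\nabla\gamma$ pointwise on $M_{\sF}$, where the two coincide because $\sF(p)=0$.
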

\begin{proof}
	Applying the stationary phase approximation, we obtain
	\begin{align*}
		Z_{\gamma}[t](P)  = \frac{(-2\pi)^m}{m!} \int_{M_X} \langle (\nabla \gamma)^m, P_{[2m]} \rangle \frac{\vol}{\sqrt{\det \Hess |\dot{\rho}(\mathrm{i}) \sF|^2|_{N_X}}},
	\end{align*}
	where $\nabla$ is the Levi-Civita connection of $g$. A direct computation shows that
	\[
	\gamma = \sum_{l=1}^k w_l(\sF^l_1 \mathrm{d} \sF^l_2 - \sF^l_2 d\sF^l_1),
	\]
	where $\sF^l = \sF^l_1 + \mathrm{i} \sF^l_2$, $\sF^l_1$ and $\sF^l_2$ are real functions on $M$. It follows that
	\[
	\nabla \gamma(p) = 2\sum_{l=1}^k w_l ( \mathrm{d} \sF^l_1 \wedge \mathrm{d} \sF^l_2)(p) = \sum_{l=1}^k w_l \mathrm{Im}(\mathrm{d}\overline{\sF^l} \wedge \mathrm{d} \sF^l)(p),
	\]
	since $\sF(p)=0$ and $\nabla \sF = \mathrm{d} \sF$. The proof for the first part of the theorem is completed by noting that $|\dot{\rho}(\mathrm{i}) \sF|^2 = \sum_{l=1}^kw_l^2|\sF^l|^2$.
	In normal coordinates $x^1,\dots, x^{2m}$ around $p \in M_{\sF}$, we have $\nabla \mathrm{d} |\sF^l|^2(p) = 2 \mathrm{Re}(\partial_i \overline{\sF^l} \partial_j \sF^l)(p) dx^i \otimes dx^j$, and
	\[
	[\Delta, \gamma \llcorner] 
	= 2 \sum_{l=1}^k w_l \mathrm{Im}(\partial_i \overline{\sF}^l \partial_j \sF^l)  \iota_{dx^i} \iota_{dx^j}.
	\]
	It follows that
	$
	[\Delta, \gamma \llcorner]^m 
	= 2^m m! \pf \mathrm{Im}(\partial_i \overline{\sF} \partial_j \sF),
	$
	which completes the proof for the second part of the theorem.
\end{proof}

\begin{rmk}
	If $M_{\sF}$ is discrete, then both $X$ and $\sF$ can be linearized in the normal coordinates $x^1, \cdots, x^{2m}$ around any point $p \in M_{\sF}$. We can write
	\[
	X(p) = \sum_{j=1}^{m}\lambda_j(p)(x^{2j}\partial_{2j-1}-x^{2j-1}\partial_{2j}), \quad \sF^l(p) = \sum_{j=1}^{2m}c^l_{j}(p)x^{j},
	\]
	where $c^l_j(p)$ are complex numbers. Since $\sF$ is $\U(1)$-equivariant, we must have $X(\sF)(p) = w_l \mathrm{i} \sF$, which implies that
	\begin{align*}
		\begin{cases}
			c_{2j-1}^l = \pm \mathrm{i} c^l_{2j}, & \text{if}~\lambda_j = \pm w_l; \\
			c_{2j-1}^l = c^l_{2j}=0, & \text{if}~\lambda_j \neq \pm w_l.
		\end{cases}
	\end{align*}
	Let $\Lambda_X\coloneqq\cup_{p \in M_X}\{\lambda_1(p), \dots, \lambda_m(p)\}$. Since $\Hess |\dot{\rho}(\mathrm{i})\sF|^2$ is non-degenerate at each $p \in M_X$, we conclude that
	\[
	\Lambda_X \subset \{w_1, \dots, w_k\}.
	\]
	In particular, we must have $k \geq |\Lambda_X|$.
\end{rmk}

\section{Applications and discussions}

To apply Theorems \ref{pvloc} and \ref{pvloccoh}, we need to find a way to extend a $\Delta$-closed multivector field to a equivariantly closed multivector field. Let us consider a function of the form $e^S \in C^{\infty}(M)$ and assume that we can find a bivector field $I_X \in \MV^{-2}(M)$ such that
\[
\Delta_{\mathfrak{u}(1)} e^{S_{eq}} = 0,
\quad
S_{eq}\coloneqq S + \phi I_X.
\]
Such $I_X$ exists if and only if 
\[
\Delta(S_{eq}) + \frac{1}{2}\{S_{eq}, S_{eq}\} - \phi X = 0.
\]
This equation can be broken into three independent equations:
\begin{align}
	&\Delta(S) + \frac{1}{2}\{S,S\} = 0, \label{eq1}\\
	&\Delta(I_X) + \{S, I_X\} = X, \label{eq2}\\
	&\{I_X, I_X\}=0. \label{eq3}
\end{align}

\subsection{Poisson geometry}

For the BV algebra $\MV(M)$, the quantum master equation \eqref{eq1} is satisfied since the restrictions of $\Delta$ and $\{\cdot, \cdot\}$ to $C^{\infty}(M)$ vanish. Equation \eqref{eq3} tells us that $I_X$ defines a Poisson structure $\pi$ on $M$.  Equation \eqref{eq2} is the most interesting one. In particular, it implies that
\[
\mathrm{Lie}_X(\pi) = \{X, \pi\} = \{\Delta(\pi), \pi\} + \{\{S,\pi\} , \pi\} = [\Delta, \{\pi, \pi\}] + \frac{1}{2} \{S, \{\pi,\pi\}\}= 0.
\]
Therefore, the $\mathrm{U}(1)$-action on $(M,\pi)$ is Poisson.
It follows from Theorem \ref{pvloc} that
\begin{thm}\label{ap1}
	Let $(M, \pi)$ be a $2m$-dimensional orientable Poisson compact manifold endowed with a Poisson $\mathrm{U(1)}$-action. Let $\vol$ be a $\mathrm{U}(1)$-invariant volume form on $M$. Supposing that
	\begin{align}\label{key}
		[X]=[X_{\vol}] \in H^1_{\pi}(M),
	\end{align}
	where $H^{\bullet}_{\pi}(M)$ are the Poisson cohomology groups of $M$, $X$ is the fundamental vector field of the $\mathrm{U}(1)$-action, and $X_{\vol}=\Delta(\pi)$ is the divergence of $\pi$ with respect to $\vol$, then we can find a function $h$ on $M$ satisfying
	\begin{align*}
		 X = X_{\vol} + X_h, 
	\end{align*}
	where $X_h = \{h, \pi\}$ is the Hamiltonian vector field of $h$. If the $\mathrm{U}(1)$-action has isolated fixed points, then
	\begin{equation}\label{locpoi}
		\int_M e^h \vol  = \frac{(-2\pi)^m}{m!} \sum_{p \in M_{X}}e^{h(p)}\frac{\langle \vol, \pi^{\wedge m}\rangle (p)}{\lambda_1(p)\cdots\lambda_m(p)}.
	\end{equation}
\end{thm}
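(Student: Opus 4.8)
The plan is to realize the left-hand side of \eqref{locpoi} as the equivariant BV integral of an explicitly constructed $\Delta_{\ut(1)}$-closed multivector field, and then to invoke the Berline--Vergne formula \eqref{bvf} of Theorem~\ref{pvloc}. The whole argument is an application of the master-equation scheme of this section together with \ref{pvloc}; the content lies in identifying the correct equivariant extension.

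First I would settle the existence of $h$. Both $X$ and the modular vector field $X_{\vol}=\Delta(\pi)$ are Poisson vector fields: $X$ because the action is Poisson ($\mathrm{Lie}_X\pi=0$), and $X_{\vol}$ because the modular vector field is a standard $\{\pi,\cdot\}$-cocycle (it follows from $\Delta^2=0$ and $\{\pi,\pi\}=0$). Hence both define classes in $H^1_\pi(M)$, and \eqref{key} says their difference is a coboundary, i.e.\ a Hamiltonian vector field; thus $X-X_{\vol}=X_h=\{h,\pi\}$ for some $h$. Since $\vol$ is $\U(1)$-invariant, $\Delta$ commutes with $\mathrm{Lie}_X$, so $X_{\vol}$ is invariant, and therefore $X_h=X-X_{\vol}$ is invariant. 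Averaging $h$ over $\U(1)$ then produces an invariant function with the same Hamiltonian vector field (using that $\pi$ is invariant), so I may assume $h\in C^\infty(M)^{\U(1)}$; this invariance is exactly what makes the multivector field below genuinely equivariant, while the fixed-point values $h(p)$ are unaffected by the averaging since $X(p)=0$.

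Next comes the key construction. Taking $S=h$ and $I_X=\pi$ in the scheme of this section, set $S_{eq}=h+\phi\pi$. The three obstruction equations collapse to facts already in hand: \eqref{eq1} is automatic because $\Delta$ and $\{\cdot,\cdot\}$ vanish on $C^\infty(M)\subset\MV(M)$; \eqref{eq3} is $\{\pi,\pi\}=0$, the Poisson condition; and \eqref{eq2} reads $\Delta(\pi)+\{h,\pi\}=X_{\vol}+X_h=X$, precisely the relation just established. Hence $P:=e^{S_{eq}}=e^{h}e^{\phi\pi}=e^{h}\sum_{k=0}^{m}\tfrac{\phi^k}{k!}\pi^{\wedge k}$ (the sum truncating because $\pi^{\wedge(m+1)}=0$) is $\U(1)$-invariant, $\Delta_{\ut(1)}$-closed, and homogeneous of degree $0$. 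Its $0$-multivector component is $P_{[0]}=e^{h}$, so $\int_M P=\int_M\vol\,P_{[0]}=\int_M e^{h}\vol$, the left-hand side of \eqref{locpoi}. Since the action has isolated fixed points and $\dim M=2m$, Theorem~\ref{pvloc} applies via \eqref{bvf}, whose right-hand side sees only the top component $P_{[2m]}=\tfrac{1}{m!}e^{h}\pi^{\wedge m}$ (the $k=m$ term), with pairing $\langle\vol,P_{[2m]}\rangle=\tfrac{1}{m!}e^{h}\langle\vol,\pi^{\wedge m}\rangle$; substituting gives \eqref{locpoi}.

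The step requiring the most care is the construction of the equivariant extension $e^{h+\phi\pi}$, namely verifying that \eqref{eq1}--\eqref{eq3} are the genuine obstruction and that they collapse to the hypothesis \eqref{key}; this is what links the Poisson data to the BV localization of Theorem~\ref{pvloc}. The remaining technical point to track is the bookkeeping of the equivariant parameter $\phi$: the numerator $P_{[2m]}$ carries a factor $\phi^m$ while the equivariant Euler class in the denominator of \eqref{bvf} carries a matching $\phi^m$ through the weights $\lambda_1,\dots,\lambda_m$, and one must check that these cancel so that the $\phi$-independent quantity $\int_M e^{h}\vol$ emerges with the stated normalization $(-2\pi)^m/m!$.
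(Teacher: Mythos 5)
Your proof is correct and takes essentially the same route as the paper: the paper obtains \eqref{locpoi} exactly by setting $S=h$, $I_X=\pi$ in the master-equation scheme \eqref{eq1}--\eqref{eq3}, so that $P=e^{h+\phi\pi}$ is $\Delta_{\ut(1)}$-closed with $P_{[0]}=e^h$ and $P_{[2m]}=\tfrac{1}{m!}e^h\pi^{\wedge m}$, and then applies formula \eqref{bvf} of Theorem \ref{pvloc}. Your averaging argument producing a $\U(1)$-invariant $h$ (needed so that $e^{h+\phi\pi}$ genuinely lies in $\MV_{\U(1)}(M)$, since $X(h)=X_{\vol}(h)$ need not vanish a priori) is a worthwhile detail that the paper leaves implicit.
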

\begin{rmk}
	If $\pi = \omega^{-1}$ and $\vol = \omega^{\wedge m}/m!$, where $\omega$ is a symplectic structure on $M$, \eqref{locpoi} recovers the Duistermaat-Heckman localization formula in symplectic geometry.
\end{rmk}

In particular, if the Poisson structure $\pi$ is unimodular, i.e., $[X_{\vol}]=0$, then \eqref{key} implies that the $\mathrm{U}(1)$-action on $(M,\pi)$ is Hamiltonian.

\begin{cor}\label{unipoissrank}
	Let $(M,\pi)$ be a $2m$-dimensional compact unimodular Poisson manifold. If $(M,\pi)$ has a Hamiltonian $\mathrm{U}(1)$-action with a set $M_{X}$ of isolated fixed points, then one can find $p \in M_{X}$ such that $\mathrm{rank}(\pi_p)=2m$. 
\end{cor}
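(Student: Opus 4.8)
The plan is to derive the statement as a direct consequence of the Poisson localization formula \eqref{locpoi} in Theorem \ref{ap1}, by observing that its left-hand side is manifestly nonzero. First I would check that Theorem \ref{ap1} applies. Unimodularity of $\pi$ means $[X_{\vol}]=0$ in $H^1_{\pi}(M)$, and the assumption that the $\U(1)$-action is Hamiltonian means $[X]=0$; hence $[X]=[X_{\vol}]$, so hypothesis \eqref{key} holds and we obtain a real function $h$ with $X=X_{\vol}+X_h$. Since $M_X$ is assumed discrete, formula \eqref{locpoi} applies directly.

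The heart of the argument is that the left-hand side $\int_M e^h \vol$ does not vanish: because $e^h>0$ everywhere and $\vol$ is a nowhere-vanishing top form, the integrand $e^h\vol$ is a nowhere-vanishing form of constant sign relative to the orientation, so its integral over the compact oriented manifold $M$ is nonzero. Consequently the right-hand side of \eqref{locpoi} is nonzero. Since it is a finite sum over $M_X$, at least one summand $e^{h(p)}\langle \vol,\pi^{\wedge m}\rangle(p)/(\lambda_1(p)\cdots\lambda_m(p))$ must be nonzero.

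It remains to read off the geometric content. At each fixed point the weights $\lambda_j(p)$ are nonzero integers and $e^{h(p)}>0$, so a summand can be nonzero only when $\langle \vol,\pi^{\wedge m}\rangle(p)\neq 0$; fix such a $p$. As $\pi^{\wedge m}$ is the top exterior power of the bivector $\pi$ on the $2m$-dimensional manifold $M$, the pairing $\langle \vol,\pi^{\wedge m}\rangle(p)\neq 0$ is equivalent to $\pi^{\wedge m}(p)\neq 0$, i.e.\ to nondegeneracy of $\pi_p$, so $\mathrm{rank}(\pi_p)=2m$ as required. I do not expect a genuine obstacle here, since all the analytic work is absorbed into Theorem \ref{ap1}; the only points needing mild care are confirming that the function $h$ produced there is real-valued (so that $e^h>0$ and the integral is strictly of one sign) and that the discreteness of $M_X$ indeed places us in the regime where \eqref{locpoi} holds.
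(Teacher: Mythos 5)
Your proposal is correct and takes essentially the same route as the paper: verify hypothesis \eqref{key} (the paper does this by choosing a $\U(1)$-invariant volume form with $X_{\vol}=0$ exactly, you by noting $[X]=[X_{\vol}]=0$ in $H^1_{\pi}(M)$), apply \eqref{locpoi}, observe that the left-hand side $\int_M e^h\vol$ is nonzero since $e^h\vol$ has constant sign, and conclude that some summand, hence $\langle \vol,\pi^{\wedge m}\rangle(p)$, is nonzero at some fixed point $p$. The paper leaves the positivity of the left-hand side implicit; you make it, and the reality of $h$, explicit, which is a fair elaboration rather than a different argument.
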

\begin{proof}
	By assumption, one can find a $\mathrm{U}(1)$-invariant volume form $\vol$ on $M$ such that $X_{\vol}=0$. Theorem \ref{ap1} then implies that 
	\[
	\sum_{p \in M_{X}}\exp(h(p))\frac{\langle \vol, \pi^{\wedge m}\rangle (p)}{\lambda_1(p)\cdots\lambda_m(p)}  \neq 0,
	\]
	where $h$ is any Hamiltonian function of the $\mathrm{U}(1)$-action. This is possible only if $\pi$ has rank $2m$ at one of the fixed points of the $\mathrm{U}(1)$-action. 
\end{proof}
\begin{cor}
	Let $(M,\pi)$ be as in Corollary \ref{unipoissrank}. If $\pi$ is regular, then it must be sympelctic.
\end{cor}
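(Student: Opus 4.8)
The plan is to combine the definition of regularity with the rank statement already established in Corollary \ref{unipoissrank}. Recall that a Poisson structure $\pi$ is \emph{regular} precisely when the rank of the bundle map $\pi^{\sharp}\colon T^*M \rightarrow TM$ is constant on $M$ (equivalently, all symplectic leaves have the same dimension). So the entire task reduces to showing that this constant value is forced to be $2m = \dim M$, which immediately upgrades $\pi$ from regular to nondegenerate, hence symplectic.

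First I would invoke Corollary \ref{unipoissrank} verbatim: since $(M,\pi)$ is a $2m$-dimensional compact unimodular Poisson manifold carrying a Hamiltonian $\U(1)$-action with isolated fixed points, there is at least one fixed point $p \in M_X$ at which $\mathrm{rank}(\pi_p) = 2m$. Next, because $\pi$ is regular the rank function $q \mapsto \mathrm{rank}(\pi_q)$ is constant on $M$; as it attains the maximal value $2m$ at the point $p$, it must equal $2m$ at every $q \in M$. Thus $\pi^{\sharp}_q\colon T^*_q M \rightarrow T_q M$ is an isomorphism for all $q$, i.e. $\pi$ is everywhere nondegenerate.

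Finally I would record the standard fact that a nondegenerate Poisson bivector is the inverse of a symplectic form. Defining $\omega \coloneqq \pi^{-1}$ fiberwise through the inverse of $\pi^{\sharp}$, nondegeneracy of $\pi$ is exactly nondegeneracy of $\omega$, and the Jacobi identity $\{\pi,\pi\}=0$ is equivalent to $\mathrm{d}\omega = 0$. Hence $(M,\omega)$ is a symplectic manifold and $\pi = \omega^{-1}$ is a symplectic Poisson structure, as claimed.

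There is essentially no real obstacle here: all of the analytic and localization-theoretic content was already absorbed into Corollary \ref{unipoissrank}, and this statement is a short deduction from it. The only points deserving a word of care are that the rank of an antisymmetric tensor is automatically even (so that "full rank $2m$" is consistent with $\dim M = 2m$ and there is no parity gap to rule out) and the classical equivalence, used in the last step, between nondegenerate Poisson structures and symplectic forms.
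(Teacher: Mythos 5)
Your proposal is correct and is exactly the argument the paper intends: the corollary is stated without proof precisely because it follows immediately from Corollary \ref{unipoissrank} in the way you describe (full rank at one fixed point, constant rank by regularity, hence nondegenerate everywhere, hence symplectic). The two points of care you flag at the end are exactly the right ones to mention, and nothing more is needed.
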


\subsection{Equivariant AKSZ theory}

The solutions
\[
\Delta(I_X) = 0, \quad \{S,I_X\}=X
\]
to \eqref{eq2} also play an interesting role in equivariant AKSZ field theories \cite{bonechi2020equivariant}. In fact, the functional
\[
I_X = \int_{\Sigma} \mathbf{P} \iota_{X_{\Sigma}} \mathbf{Q}. 
\]
constructed in \cite{bonechi2023towards} is an example of such solution. Here, $\mathbf{P}$ and $\mathbf{Q}$ are the AKSZ superfields, $S$ is the AKSZ action functional, $X_{\Sigma}$ is a fundamental vector field corresponding to a $\U(1)$-action on the source manifold $\Sigma$, $\iota_{X_{\Sigma}}$ is the contraction by $X_{\Sigma}$, and $X$ is the vector field over the mapping space induced by $X_{\Sigma}$. 


\subsection{Cohomological field theory}

Let us briefly outline the connections between Theorem \ref{pvloccoh} and the zero-dimensional cohomological field theory determined by the $G$-equivariant map $\sF\colon M \rightarrow W$. The CohFT configuration space is the following differential graded manifold: 
\[
\mathcal{E}:=(T[1](M \times W[-1] \times \mathfrak{g}[1]), Q).
\]
Note that the algebra of superfunctions on $\mathcal{E}$ can be identified as
\[
C^{\infty}(\mathcal{E}) \cong W(\mathfrak{g}) \otimes \Omega(M) \otimes \Omega(W),
\]
where $\Omega(W) \coloneqq \Lambda (W^{\vee}) \otimes \Sym (W)$. The cohomological vector field $Q$ is the Kalkman differential \cite{kalkman1993brst}. 
Let $(x^{\mu}, \psi^{\mu}, \chi^{i}, b_i, \theta^a, \phi^a)$ be local coordinate functions on $\mathcal{E}$. The CohFT action functional is defined as
\[
S = Q(i\langle \chi, b \rangle + \langle \chi, \sF \rangle),
\]
where $\langle \cdot, \cdot \rangle$ is the canonical pairing between $W^*$ and $W$. Consider the path integral 
\[
Z = \int_{\mathcal{E}} \mu e^{\mathrm{i}tS} P,
\]
where $\mu$ is the canonical Berezinian on $\mathcal{E}$, $P$ is a $Q$-closed superfunction on $\mathcal{E}$. $Z$ and $Q$ can be transformed into \eqref{pi} and the equivariant BV Laplacian $\Delta_{\mathfrak{g}}$, respectively, via an equivariant extension of the odd Fourier transform \cite{qiu2011introduction}
\[
C^{\infty}(T[1](M\times W[-1])) \rightarrow C^{\infty}(T^*[-1](M\times W[-1])),
\]
followed by integrating out the $\chi$ and $b$ variables.\footnote{This is well-defined because $e^{\mathrm{i}tS}$ is Gaussian with respect to $b$.}

Cohomological field theories are both mathematically and physically more compelling in non-zero dimensions. While the Berezinian on the infinite-dimensional CohFT configuration space is not well-defined, the moduli space $\sF^{-1}(0)/G$ is finite-dimensional for a nice $\sF$, and the CohFT path integral can be perturbatively well-defined. Furthermore, the infinite-dimensional (equivariant) BV Laplacian can be rigorously defined through an appropriate regularization procedure, such as heat kernel regularization \cite{costello2022renormalization}. This is a key advantage of the (equivariant) BV theory over (equivariant) de Rham theory in applications to (perturbative) quantum field theory. Therefore, it would be interesting to extend the BV equivariant localization principle introduced in this work to infinite-dimensional settings and apply it to various CohFTs, such as Donaldson--Witten and Seiberg--Witten theories, where numerous elegant localization formulas have been derived by mathematicians and physicists \cite {blau1995localization, constantinescu1998cicular,cordes1995lectures, nekrasov2003seiberg, pestun2012localization, pestun2017localization, vajiac2000derivation}.

\vspace{10pt}

This work opens several promising avenues for future research. Notably, higher and non-abelian generalizations of the framework merit investigation, as they are expected to reveal the deeper capabilities of the Batalin-Vilkovisky formalism. For example, in the present work, we implicitly rely on the fact that $M$ serves as a gauge-fixing Lagrangian submanifold of $T^*[-1]M$. However, for a general odd symplectic manifold, there is greater flexibility in choosing the gauge-fixing condition. We plan to explore these directions in future studies.

\begin{bibsection}
	\begin{biblist}
		\bib{alekseev2000notes}{incollection}{
			title={Notes on equivariant localization},
			author={Alekseev, Anton},
			booktitle={Geometry and Quantum Physics: Proceeding of the 38. Internationale Universit{\"a}tswochen f{\"u}r Kern-und Teilchenphysik, Schladming, Austria, January 9--16, 1999},
			pages={1--24},
			date={2000},
			publisher={Springer}
		}
		\bib{atiyah1984moment}{article}{
			title={The moment map and equivariant cohomology},
			author={Atiyah, Michael F},
			author= {Bott, Raoul},
			journal={Topology},
			volume={23},
			number={1},
			pages={1--28},
			date={1984},
			publisher={Citeseer}
		}
		\bib{berline1982classes}{article}{
			title={Classes caract{\'e}ristiques {\'e}quivariantes. Formule de localisation en cohomologie {\'e}quivariante},
			author={Berline, Nicole},
			author={Vergne, Mich{\`e}le},
			journal={CR Acad. Sci. Paris},
			volume={295},
			number={2},
			pages={539--541},
			date={1982}
		}
		\bib{berline2003heat}{book}{
			title={Heat kernels and Dirac operators},
			author={Berline, Nicole},
			author={Getzler, Ezra},
			author={Vergne, Mich{\`e}le},
			date={2003},
			publisher={Springer Science \& Business Media}
		}
		\bib{blau1995localization}{article}{
			title={Localization and diagonalization: A review of functional integral techniques for low-dimensional gauge theories and topological field theories},
			author={Blau, Matthias},
			author={Thompson, George},
			journal={Journal of Mathematical Physics},
			volume={36},
			number={5},
			pages={2192--2236},
			date={1995},
			publisher={American Institute of Physics}
		}
		\bib{bonechi2020equivariant}{article}{
			title={Equivariant Batalin--Vilkovisky formalism},
			author={Bonechi, Francesco},
			author={Cattaneo, Alberto S},
			author={Qiu, Jian},
			author={Zabzine, Maxim},
			journal={Journal of Geometry and Physics},
			volume={154},
			pages={103720},
			date={2020},
			publisher={Elsevier}
		}
		\bib{bonechi2023towards}{article}{
			title={Towards equivariant Yang-Mills theory},
			author={Bonechi, Francesco},
			author={Cattaneo, Alberto S},
			author={Zabzine, Maxim},
			journal={Journal of Geometry and Physics},
			volume={189},
			pages={104836},
			date={2023},
			publisher={Elsevier}
		}
		\bib{cattaneo2018graded}{article}{
			title={Graded poisson algebras},
			author={Cattaneo, Alberto S},
			author={Fiorenza, Domenico},
			author={Longoni, Riccardo},
			journal={arXiv preprint},
			eprint={1811.07395},
			date={2018}
		}
		\bib{constantinescu1998cicular}{thesis}{
			title={Cicular symmetry in topological quantum field theory and the topology of the index bundle},
			author={Constantinescu, Radu},
			date={1998},
			type={Ph.D. Thesis},
			organization={Massachusetts Institute of Technology}
		}
		\bib{cordes1995lectures}{article}{
			title={Lectures on 2D Yang-Mills theory, equivariant cohomology and topological field theories},
			author={Cordes, Stefan}, 
			author={Moore, Gregory},
			author={Ramgoolam, Sanjaye},
			journal={Nuclear Physics B-Proceedings Supplements},
			volume={41},
			number={1-3},
			pages={184--244},
			date={1995},
			publisher={Elsevier}
		}
		\bib{costello2022renormalization}{book}{
			title={Renormalization and effective field theory},
			author={Costello, Kevin},
			volume={170},
			date={2022},
			publisher={American Mathematical Society}
		}
		\bib{duistermaat1982variation}{article}{
			title={On the variation in the cohomology of the symplectic form of the reduced phase space},
			author={Duistermaat, Johannes J},
			author={Heckman, Gerrit J},
			journal={Inventiones mathematicae},
			volume={69},
			number={2},
			pages={259--268},
			date={1982},
			publisher={Springer-Verlag Berlin/Heidelberg}
		}
		\bib{kalkman1993brst}{article}{
			title={BRST model for equivariant cohomology and representatives for the equivariant Thom class},
			author={Kalkman, Jaap},
			journal={Communications in mathematical physics},
			volume={153},
			pages={447--463},
			date={1993},
			publisher={Springer}
		}
		\bib{nekrasov2003seiberg}{article}{
			author = {Nekrasov, Nikita A.},
			title = {Seiberg-{Witten} prepotential from instanton counting},
			journal = {Advances in Theoretical and Mathematical Physics},
			volume = {7},
			number = {5},
			pages = {831--864},
			date = {2003},
		}
		\bib{pestun2012localization}{article}{
			title={Localization of gauge theory on a four-sphere and supersymmetric Wilson loops},
			author={Pestun, Vasily},
			journal={Communications in Mathematical Physics},
			volume={313},
			number={1},
			pages={71--129},
			date={2012},
			publisher={Springer}
		}
		\bib{pestun2017localization}{article}{
			author={Pestun, Vasily}, 
			author={Zabzine, Maxim}, 
			author={Benini, Francesco}, 
			author={Dimofte, Tudor}, 
			author={Dumitrescu, Thomas T.}, 
			author={Hosomichi, Kazuo}, 
			author={Kim, Seok}, 
			author={Lee, Kimyeong}, 
			author={Le Floch, Bruno}, 
			author={Mari{\~n}o, Marcos}, 
			author={Minahan, Joseph A.}, 
			author={Morrison, David R.}, 
			author={Pasquetti, Sara}, 
			author={Qiu, Jian}, 
			author={Rastelli, Leonardo}, 
			author={Razamat, Shlomo S.}, 
			author={Pufu, Silvu S.}, 
			author={Tachikawa, Yuji}, 
			author={Willett, Brian}, 
			author={Zarembo, Konstantin},
			title = {Localization techniques in quantum field theories},
			journal = {Journal of Physics A: Mathematical and Theoretical},
			date = {2017},
			publisher = {IOP Publishing},
			volume = {50},
			number = {44},
			pages = {440301},
		}
		\bib{qiu2011introduction}{article}{
			title={Introduction to graded geometry, Batalin-Vilkovisky formalism and their applications},
			author={Qiu, Jian},
			author={Zabzine, Maxim},
			journal={arXiv preprint},
			eprint={1105.2680},
			date={2011}
		}
		\bib{schwarz1993geometry}{article}{
			title={Geometry of Batalin-Vilkovisky quantization},
			author={Schwarz, Albert},
			journal={Communications in Mathematical Physics},
			volume={155},
			number={2},
			pages={249--260},
			date={1993},
			publisher={Springer}
		}
		\bib{vajiac2000derivation}{article}{
			author = {Vajiac, Adrian},
			title = {A derivation of {Witten}'s conjecture relating {Donaldson} and {Seiberg}-{Witten} invariants},
			date = {2000},
			journal={arXiv preprint},
			eprint = {hep-th/0003214}
		}
		\bib{zakharevich2017localization}{article}{
			title={Localization and stationary phase approximation on supermanifolds},
			author={Zakharevich, Valentin},
			journal={Journal of Mathematical Physics},
			volume={58},
			number={8},
			date={2017},
			publisher={AIP Publishing}
		}
	\end{biblist}
\end{bibsection}

\end{document}